\def\BibTeX{{\rm B\kern-.05em{\sc i\kern-.025em b}\kern-.08em
		T\kern-.1667em\lower.7ex\hbox{E}\kern-.125emX}}
\newtheorem{thm}{\textbf{Theorem}}
\newtheorem{lem}{\textbf{Lemma}}
\long\def\symbolfootnote[#1]#2{\begingroup%
	\def\thefootnote{\fnsymbol{footnote}}\footnote[#1]{#2}\endgroup}
\newcommand{\beq}{\begin{equation}}
\newcommand{\eeq}{\end{equation}}
\newcommand{\beqa}{\begin{eqnarray}}
\newcommand{\eeqa}{\end{eqnarray}}
\tikzset{
	startstop/.style={
		rectangle, 
		rounded corners,
		minimum width=3cm, 
		minimum height=0.5cm,
		align=center, 
		draw=black, 
	},
	process/.style={
		rectangle, 
		minimum width=3cm, 
		minimum height=0.5cm, 
		align=center, 
		draw=black, 
	},
	decision/.style={
		rectangle, 
		minimum width=3cm, 
		minimum height=0.5cm, align=center, 
		draw=black, 
	},
	arrow/.style={thick,->,>=stealth},
	dec/.style={
		ellipse, 
		align=center, 
		draw=black, 
	},
}
\def\BState{\State\hskip-\ALG@thistlm}
\newcommand{\removelatexerror}{\let\@latex@error\@gobble}
\newcommand{\RNum}[1]{\uppercase\expandafter{\romannumeral #1\relax}}
\def\BibTeX{{\rm B\kern-.05em{\sc i\kern-.025em b}\kern-.08em
    T\kern-.1667em\lower.7ex\hbox{E}\kern-.125emX}}
\title{User Activity Detection for Irregular Repetition Slotted Aloha based MMTC}
\author{\IEEEauthorblockN{Chirag Ramesh Srivatsa, \IEEEmembership{Graduate Student Member, IEEE,} \orcidlink{0000-0002-3732-4733} and Chandra R. Murthy, \IEEEmembership{Senior Member, IEEE} \orcidlink{0000-0003-4901-9434}} 
\thanks{Chirag Ramesh Srivatsa is with the Robert Bosch Centre for Cyber-Physical Systems, Indian Institute of Science, Bangalore, India. Chandra R.  Murthy is with the Dept. of Electrical Communication Engineering, Indian Institute of Science, Bangalore, India (e-mail: \{chiragramesh, cmurthy\}@iisc.ac.in).  
} 
}
\begin{document}

\maketitle
\begin{abstract}
Irregular repetition slotted aloha (IRSA) is a grant-free random access protocol for massive machine-type communications, in which users transmit replicas of their packet in randomly selected resource blocks within a frame.
In this paper, we first develop a novel Bayesian user activity detection (UAD) algorithm for IRSA, which exploits both the sparsity in user activity as well as the underlying structure of IRSA transmissions.
Next, we derive the Cram\'{e}r-Rao bound (CRB) on the mean squared error in channel estimation. 
We empirically show that the channel estimates obtained as a by-product of the proposed UAD algorithm achieves the CRB.
Then, we analyze the signal to interference plus noise ratio achieved by the users, accounting for UAD, channel estimation errors, and pilot contamination.
Finally, we illustrate the impact of these non-idealities on the throughput of IRSA via Monte Carlo simulations. 
For example,  in a system with $1500$ users and $10\%$ of the users being active per frame, a pilot length of as low as $20$ symbols is sufficient for accurate user activity detection. 
In contrast, using classical compressed sensing approaches for UAD would require a pilot length of about $346$ symbols.
Our results reveal crucial insights into dependence of UAD errors and throughput on parameters such as the length of the pilot sequence, the number of antennas at the BS, the number of users, and the signal to noise~ratio.
\end{abstract}

\begin{IEEEkeywords}
Irregular repetition slotted aloha, grant-free random access, massive machine-type communications, user activity detection, channel estimation.
\end{IEEEkeywords}

\section{Introduction} \label{sec_intro}
Massive machine-type communications (MMTC) is expected to serve around a million devices per square kilometer~\cite{ref_shariatmadari_commag_2015}.
Typical MMTC devices transmit short packets to a central base station (BS), and are sporadically active~\cite{ref_xia_comstd_2019}.
To facilitate efficient random access for such MMTC applications, distributed grant-free random access (GFRA) protocols need to be used, as they can serve a large number of users without incurring a large signaling overhead~\cite{ref_liu_spmag_2018}.
Since only a subset of users are active in any frame in MMTC~\cite{ref_xia_comstd_2019},  it is essential for the BS to detect the set of users that are active,  before proceeding to perform channel estimation and data decoding.
This process is termed \emph{user activity detection (UAD).} 
Furthermore, without UAD, the BS would waste valuable resources attempting to decode a large number of users that have not transmitted any packets, i.e., users that are inactive.
Knowing the subset of active users not only saves computational resources by helping the BS decide which users it needs to decode,  it is also important for channel estimation, as will be seen in the sequel.
Errors arising from the UAD process, namely, false positives and false negatives, deteriorate the channel estimates computed at the BS, which  in turn affects the data decoding. 
Hence, it is crucial to account for these errors while analyzing the performance of GFRA protocols.

\subsection{Motivation}
Irregular repetition slotted aloha (IRSA) is a popular GFRA protocol in which users transmit replicas of their packets in multiple resource blocks (RBs) in a frame \cite{ref_liva_toc_2011}.
Each RB can accommodate a whole data packet.
In IRSA, each user samples their repetition factor $d$ from a predefined distribution independently of other users and then transmits replicas of its packet in $d$ RBs chosen uniformly at random from the set of all RBs in the frame~\cite{ref_liva_toc_2011}.
The indices of the RBs in which the users transmit their packet replicas define the \emph{access pattern matrix (APM),} which we assume is known at the BS. 
This assumption is explained in Sec.~\ref{sec_sys_model}.
Due to the structure of the APM, applying existing UAD algorithms to IRSA can lead to suboptimal performance.
In particular, it is necessary to combine the information available in each RB in a principled manner, to accurately detect the active users.

Typically, UAD and channel estimation is performed by the BS using pilots transmitted by the users in their packet headers.
If the users employ mutually orthogonal pilots, there is no pilot contamination, making UAD and channel estimation simple.
However, the length of orthogonal pilots scales linearly with the total number of users, and hence the pilot overhead quickly overshadows the data payload size as the number of users gets large~\cite{ref_liu_tsp_2018}. 
Thus, non-orthogonal pilots are used, and the resulting pilot contamination leads to both UAD errors and channel estimation errors. 
These effects must be accounted for while analyzing the performance of IRSA.
The main goal of this paper is to understand the effect of system parameters such as pilot length, SNR, and the number of antennas at the BS on the performance of IRSA, accounting for UAD errors, channel estimation errors, and pilot contamination.

\subsection{Related Works}
Early works in IRSA considered the collision model in which only \emph{singleton} RBs are decodable.
Singleton RBs are RBs in which only a single user has transmitted, and since there are no collisions in such RBs, users can be decoded with high probability.
The decoding proceeds in iterations, and occurs via inter-RB successive interference cancellation (SIC), which refers to the removal of a packet replica from all other RBs where the same packet was transmitted.
The decoding stops when there is no singleton RB available.
The throughput of IRSA under the collision model is at most one packet per RB~\cite{ref_liva_toc_2011}, which is achievable asymptotically with the number of RBs and users if the soliton distribution is used to generate the repetition factors~\cite{ref_narayanan_istc_2012}.

In the case where the BS is equipped with multiple antennas, multiple users could potentially be decoded in an RB \cite{ref_bjornson_mimo_2017}, and thus singleton RBs are not necessary for decoding. 
Any user with a sufficiently high instantaneous signal to interference plus noise ratio (SINR) can first be decoded, and the contribution of that user can be removed from the same RB.
This process, termed intra-RB SIC, refers to the removal of interference of a packet replica from the same RB within which it was decoded.
After the user with the highest SINR is decoded, other users could potentially be decoded as well.
By performing both intra-RB and inter-RB SIC, the packet replicas of different users are removed from all RBs wherein the same user has transmitted a packet.
This model, which we use in this paper, is termed as the \emph{SINR threshold} model, and it yields a higher throughput than the conventional singleton decoding model.
IRSA has been studied with the SINR threshold model under scalar Rayleigh fading channels with perfect channel state information (CSI)~\cite{ref_clazzer_icc_2017} and pure path loss channels~\cite{ref_khaleghi_pimrc_2017}.
Coded slotted aloha (CSA), which is a variant of IRSA,  was recently analyzed with imperfect SIC~\cite{ref_dumas_tvt_2021}.
The authors in~\cite{ref_valentini_globecom_2021} studied CSA with an acknowledgement mechanism between frames.
IRSA was analyzed with an SIC limit, i.e.,  a limit on the maximum number of packets that can be recovered in each RB, in~\cite{ref_shieh_tcom_2022}.
The average age of information in IRSA in MMTC has also been studied~\cite{ref_saha_jsac_2021}.
We have previously \cite{ref_srivatsa_spawc_2019, ref_srivatsa_chest_tsp_2021} analyzed the IRSA protocol accounting for channel estimation and pilot contamination, with perfect UAD. 
In contrast, this paper focuses on UAD in IRSA, and analyzes the impact of UAD errors on the throughput.

To the best of our knowledge, the problem of UAD in IRSA has not yet been considered in the literature.  
Further, none of the existing works study the performance of IRSA accounting for UAD errors, path loss, MIMO fading, pilot contamination, and channel estimation errors.
An initial study into estimating the number of active users in IRSA was conducted in \cite{ref_sun_pimrc_2018}, which does not identify the subset of active users.
UAD has been studied for massive random access outside the context of IRSA \cite{ref_fengler_tit_2021,ref_haghighatshoar_isit_2018}.
The activity matrix to be estimated  has jointly-sparse columns, i.e., columns that have the same sparse support~\cite{ref_guo_tcom_2022}.
Typical UAD solutions involve compressed sensing-based solutions~\cite{ref_senel_tcom_2018} or a maximum aposteriori probability (MAP) detection~\cite{ref_jeong_tvt_2018}.
The sparse Bayesian learning (SBL) framework has been employed to perform UAD in MMTC~\cite{ref_rajoriya_tcom_2022}.
Faster SBL algorithms for UAD in MMTC have also been developed~\cite{ref_zhang_tvt_2021}.
Other low complexity algorithms for UAD include approximate message passing~\cite{ref_zou_splet_2020} and orthogonal matching pursuit~\cite{ref_zhang_wicomlet_2022}.
These approaches, however, cannot be used in IRSA due to the structure imposed by the APM.
A na\"{i}ve approach would be to perform UAD on an RB-by-RB basis and declare users inactive if they are found to be inactive in all the RBs.
As we will show, this approach is inefficient and results in large error rates, especially when non-orthogonal pilots are used.

\subsection{Contributions}
This paper proposes a novel UAD algorithm for IRSA, and analyzes the throughput of IRSA, accounting for UAD and channel estimation errors.
Our main contributions are:
\begin{enumerate}
\item We develop a novel Bayesian algorithm to detect the set of active users in IRSA in Sec. \ref{sec_uad}. 
UAD in IRSA is a joint-sparse signal recovery problem with a measurement model with an important twist: different and unknown subsets of the row indices of the joint-sparse matrix participate in different measurements.
Our algorithm is an enhancement to the multiple sparse Bayesian learning (MSBL) algorithm~\cite{ref_wipf_tsp_2007} to cater to this scenario.\footnote{Our UAD algorithm can be applied to other variants of IRSA such as CSA since it entails only a minor change in the structure of the APM.}
\item We derive the channel estimates at the BS for users in all RBs in IRSA, acquired via non-orthogonal pilots, in Sec. \ref{sec_ch_est}. We also derive the Cram\'{e}r-Rao bound (CRB) on the mean squared error (MSE) of the channels estimated by our proposed UAD algorithm. We show that a genie-aided minimum MSE (MMSE) estimator (that has knowledge of the second-order statistics and the user activities) achieves the CRB. 
Later, we also empirically show that the MSE of the channel estimates output by the proposed UAD algorithm meets the CRB.
\item Next, we analyze the SINR achieved by all the users in all RBs in Sec.~\ref{sec_sinr}, accounting for UAD errors, channel estimation errors, and pilot contamination. 
The SINR expression allows us to determine the throughput of IRSA, accounting for the effect of UAD errors.  
\end{enumerate}

Our numerical experiments in Sec.~\ref{sec_results} show that there is at least a 4-fold reduction in the number of pilot symbols required to achieve a similar UAD performance as that of existing approaches.
The loss in performance due to UAD errors can be recuperated by judiciously choosing the system parameters such as pilot length, number of antennas, and SNR.
In essence, it is vital to account for both UAD and channel estimation when analyzing the throughput of IRSA.

\textit{Notation:} The symbols $a$, $ \mathbf{a}$, $\mathbf{A}$, $ [\mathbf{A}]_{i,:}$, $ [\mathbf{A}]_{:,j}$, $ \mathbf{0}_N$, $ \mathbf{1}_N,$ and $\mathbf{I}_N $ denote a scalar, a vector, a matrix, the $i$th row of $\mathbf{A}$, the $j$th column of $\mathbf{A}$, all-zero vector of length $N$, all ones vector of length $N$, and an identity matrix of size $N \times N$, respectively. 
$[\mathbf{a}]_{\mathcal{S}}$ and $[\mathbf{A}]_{:,\mathcal{S}}$ denote the elements of $\mathbf{a}$ and the columns of $\mathbf{A}$ indexed by the set $\mathcal{S}$ respectively. 
$\text{diag}(\mathbf{a})$ is a diagonal matrix with diagonal entries given by $\mathbf{a}$, whereas $\text{blkdiag}(\mathbf{A},\mathbf{B})$ is a block diagonal matrix with $\mathbf{A}$ and $\mathbf{B}$ as the diagonal blocks.
$\mathbf{A} \otimes \mathbf{B}$ is the Kronecker product of $\mathbf{A}$ and $\mathbf{B}$, and $\mathbf{A} \succeq \mathbf{B} $ denotes that $\mathbf{A} - \mathbf{B}$ is positive semi-definite.
$[N]$ denotes the set $\{1,2,\ldots,N\}$.
$|\cdot|$, $\|\cdot\|$, $\|\cdot\|_F$, $ [\cdot]^T $, $ [\cdot]^H $, $\mathop{{}\mathbb{E}}[\cdot]$, and $ \mathop{{}\mathbb{E}_\mathbf{a}}\left[ \cdot \right] $ denote the magnitude, $\ell_2 $ norm, frobenius norm, transpose, conjugate transpose, unconditional expectation, and the expectation conditioned on $ \mathbf{a}$, respectively. 
The superscript ${\tt{p}}$ is used as a descriptive superscript in association with a symbol that is related to the \emph{pilots}.
All the other superscripts (or subscripts) that have not been defined as above are indices.

\section{System Model} \label{sec_sys_model}
An IRSA system is considered with $M$ single-antenna users communicating with a BS equipped with $N$ antennas.
The users are assumed to be spread randomly within a cell, with the BS located at the cell center.
These users communicate with the BS over frames consisting of $T$ RBs. 
The RBs can be slots, subcarriers or both. 
In every frame, a small subset of the $M$ users, called \emph{active} users, attempt to deliver a packet each to the BS.
In a given frame, the \emph{activity coefficient} of the $m$th user is denoted by $a_m$, where $a_m = 1$ if the $m$th user is active, and $a_m = 0$ otherwise. 
Note that $a_m$ can change from one frame to the next, and the subset of active users (and hence $a_m$) is unknown at the BS. 
The users transmit replicas of their packet according to the random matrix $\mathbf{G} \in \{0,1\}^{T \times M}$, which is called the \emph{access pattern matrix (APM).}
Here, $g_{tm} = [\mathbf{G}]_{tm}$ is the access pattern coefficient of the $m$th user in the $t$th RB.
If $g_{tm} = 1$, the $m$th user transmits its packet in the $t$th RB provided $a_m = 1$, and if $g_{tm} = 0$, the $m$th user does not transmit its packet in the $t$th RB even if $a_m = 1$. 
If $a_m=0$, the $m$th user is \emph{inactive} in the current frame, and does not transmit in any RB. 

At the BS, the received signal in the $t$th RB is a superposition of the packets transmitted by the active users that have chosen to transmit in the $t$th RB. 
The packets of the users undergo both path loss and fading. 
We assume that the path loss component and the second-order statistics of the fading component are known at the BS, and that the fading channel remains constant for the duration of an RB.\footnote{For simplicity of exposition, we consider i.i.d. Rayleigh fading between the users and the BS in each RB, although it is straightforward to extend the results to the correlated fading scenario.} 
Each packet replica comprises a header containing pilot symbols and a payload which includes the coded data and cyclic redundancy check (CRC) symbols. In the pilot phase, if $a_m = 1$, the $m$th user transmits a $\tau -$length pilot sequence $\mathbf{p}_m \in \mathbb{C}^\tau$ in each packet replica (i.e., if $g_{tm}=1$). 
Each pilot symbol has an average power $ P^{\tt{p}}$, and the average power of the pilot sequence is $\mathbb{E} [\|\mathbf{p}_m \|^2] = \tau P^{\tt{p}}$.
The received pilot signal $\mathbf{Y}_t^{\tt{p}} \in  \mathbb{C}^{N \times \tau} $ at the BS across the $N$ antennas in the $t$th RB is thus
\begin{align}
\mathbf{Y}_t^{\tt{p}} &=  \textstyle{\sum\nolimits_{m = 1}^M}  a_m g_{tm} \mathbf{h}_{tm} \mathbf{p}_{m}^{H} + \mathbf{N}_t^{\tt{p}},  \label{eqn_rec_pilot}
\end{align}
where ${\mathbf{N}_t^{\tt{p}}} \in \mathbb{C}^{N \times \tau} $ is the complex additive white Gaussian noise at the BS with $[{\mathbf{N}_t^{\tt{p}}}]_{:,j} \stackrel{\text{i.i.d.}}{\sim} \mathcal{CN}(\mathbf{0}_N,N_0\mathbf{I}_N )$, $\forall \ j \in [\tau]$ and $t \in [T]$, where $N_0$ is the noise variance.
Here, $\mathbf{h}_{tm} = \sqrt{\beta_m} \mathbf{v}_{tm}$ is the uplink channel vector of the $m$th user in the $t$th RB, where $\beta_m$ is the known path loss coefficient and $\mathbf{v}_{tm} $ is the unknown fading vector with $\mathbf{v}_{tm} \stackrel{\text{i.i.d.}}{\sim} \mathcal{CN} (\mathbf{0}_N,\sigma_{\tt{h}}^2\mathbf{I}_N ), \ \forall \ t \in [T]$ and $\ m \in [M]$, and channel variance $\sigma_{\tt{h}}^2$.

In the data phase, if $a_m = 1$, the $m$th user transmits a data symbol\footnote{To derive SINR in any given RB, only one data symbol is written out from the multiple data symbols in each packet.} $x_m$ in each packet replica that it transmits. 
The data symbol satisfies $\mathbb{E}[x_m] = 0$ and $\mathbb{E}[|x_m|^2] = P$, where $P$ denotes the data power. 
The received data signal $\mathbf{y}_t \in \mathbb{C}^N$, at the BS in the $t$th RB, is
\begin{align}
\mathbf{y}_{t}  &=  \textstyle{\sum\nolimits_{m=1}^M} a_m g_{tm} \mathbf{h}_{tm} x_m  + \mathbf{n}_{t}, \label{eqn_rec_data}
\end{align}
where $\mathbf{n}_t \in \mathbb{C}^N$ is the complex additive white Gaussian noise at the BS with $\mathbf{n}_t \stackrel{\text{i.i.d.}}{\sim} \mathcal{CN}(\mathbf{0}_N,N_0\mathbf{I}_N )$, $\forall \ t \in [T]$.

In IRSA, if the $m$th user is active, it samples its repetition factor $d_m$ from a predefined distribution, independently of other users.
Then it chooses $d_m$ RBs from a total of $T$ RBs uniformly at random, and transmits replicas of its packet in these $d_m$ RBs. 
The APM is formed as $[\mathbf{G}]_{tm} = g_{tm}, t \in [T], m \in [M]$, where $g_{tm} = 1$ if the $m$th user has chosen to transmit in the $t$th RB, and $g_{tm} = 0$ otherwise.\footnote{Note that users who are inactive in a given frame can also be virtually considered to have chosen the RBs in which they are scheduled to transmit, even though they do not transmit in any RB.}
This generation of repetition factors is scalable to a large number of users and is completely distributed in nature, and is thus appropriate for MMTC.
In practice, the random subset of RBs is generated using a pseudo-random number generator, whose seed completely determines the sequence~\cite{ref_liva_toc_2011}.\footnote{
For example, the seed could be a function of the current frame index and the user ID, such as, seed $ = [$Current Frame Index; User ID$]$. 
Using simple pseudo-random number generators and with no computational speed-up, we can generate $10^6$ random numbers within a few ms on a mid-range laptop.}
This seed can be pre-programmed at each user, and made available to the BS. 
In this case, it is reasonable to assume that the BS has knowledge of $\mathbf{G}$. 
Also, the APM can be generated in an offline fashion and stored at the BS. 
However, it is important to note that although the BS knows the subset of RBs in which a user would transmit its packet replicas \emph{if it were active,} the BS still does not have the knowledge of which users are active in a given frame.

\subsection{SIC-based Decoding in IRSA}\label{sec_sic_dec} 
The decoding process in IRSA proceeds as follows. 
The BS first detects the set of active users (denoted by $\hat{a}_m$). 
Then, it estimates the channels for all the users detected to be active in the RBs for which ${g}_{tm} = 1$. 
It uses these channel estimates to combine the received data signal across the BS antennas, and attempts to decode the user's data packet, treating interference from other users as noise.
If it successfully decodes any user, which can be verified via the CRC, it performs SIC in all RBs which that user has transmitted, with both inter-RB and intra-RB SIC.
The channels are re-estimated for the remaining users, and this decoding process proceeds iteratively.

In this work, the decoding of any user's packet is abstracted into an SINR threshold model as in \cite{ref_clazzer_icc_2017,ref_khaleghi_pimrc_2017}.
That is, the packet can be decoded correctly if and only if the SINR of an active user's packet in an RB is greater than a threshold denoted by $\gamma_{\text{th}}$, the packet can be decoded correctly.
The value of $\gamma_{\text{th}}$  is usually chosen to be $\geq 1$ for a narrowband system \cite{ref_clazzer_icc_2017}; it is a parameter for the purposes of our analysis.

We now briefly describe how to evaluate the performance of IRSA under the abstract SINR threshold-based decoding model. 
We first estimate the user activity coefficients for all users over the frame. 
For the users detected to be active, we compute channel estimates and SINR achieved in all RBs in which the users detected to be active have transmitted their packets.
This SINR accounts for the CSI available at the BS and errors in the UAD process, as we will see in Sec.~\ref{sec_sinr}.
If we find a user with $\text{SINR} \ge \gamma_{\text{th}}$ in some RB, we mark the data packet as having been decoded successfully, and remove that user from the set of users yet to be decoded.
Also, the contribution of the user's packet is removed from all RBs that contain a replica of that packet. 
In the next iteration, the channels are re-estimated from the residual pilot symbols after SIC, the SINRs are recomputed in all RBs, and the decoding of users' packets continues.
The iterations stop when no additional users are decoded in two successive iterations or if all users detected to be active have been successfully decoded. 
The system throughput $\mathcal{T}$ is calculated as the number of correctly decoded unique packets divided by the number of RBs. 
Note that the throughput accounts for packet losses that occur due to users that are incorrectly detected to be active, as well as due to failures in the SIC-based decoding process.

The rest of the paper is organized as follows. Sec. \ref{sec_uad} outlines the proposed UAD algorithm, and Sec. \ref{sec_ch_est} describes the channel estimation process.
The detailed derivation of the SINR accounting for both UAD errors and channel estimation errors is presented later in Sec.~\ref{sec_sinr}.

\section{User Activity Detection }\label{sec_uad}
In this section, we describe our user activity detection (UAD) algorithm. 
For this purpose, we consider the conjugate transpose of the received pilot signal in the $t$th RB from \eqref{eqn_rec_pilot} as $\overline{\mathbf{Y}}_t \triangleq {\mathbf{Y}_t^{\tt{p}}}^H $, with $\overline{\mathbf{N}}_t \triangleq {\mathbf{N}_t^{\tt{p}}}^H $.
The signal $\overline{\mathbf{Y}}_t$ can be factorized into the product of the product of two matrices as follows:
\begin{align}
\underbrace{\overline{\mathbf{Y}}_t}_{\tau \times N} 
&= \underbrace{\left[ \mathbf{p}_{1}, \ldots, \mathbf{p}_{M} \right] }_{\mathbf{P}}
\underbrace{\begin{bmatrix}
{a_{1}} g_{t1}  \mathbf{h}_{t1}^{H} \\
\vdots \\
{a_M} g_{tM} \mathbf{h}_{tM}^{H} 
\end{bmatrix}}_{\mathbf{X}_t}
+ \underbrace{\overline{\mathbf{N}}_t}_{\tau \times N}. \label{eqn_uad_ori} 
\end{align}
Here, $\mathbf{P} \in \mathbb{C}^{\tau \times M}$ contains the known pilot sequences of the $M$ users as its columns, and $\mathbf{X}_t \in \mathbb{C}^{M \times N}$ contains the $t$th row of the known APM $\mathbf{G}$, the unknown user activity coefficients, and the unknown channels.
Note that the $i$th row of $\mathbf{X}_t$ is nonzero only if $a_i = 1$ and $g_{ti} = 1$, i.e., when the $i$th user is active and transmits in the $t$th RB.

\begin{table}[t]
\centering
\caption{{Hyperparameter notation in Algorithm~\ref{algo_uad}.}}
\resizebox{0.48\textwidth}{!}{%
\begin{tabular}{|c|l|}
\hline
Symbol & Quantity \\ \hline
$\bm \gamma$  &  Hyperparameter vector of all $M$ users    \\ \hline
$\bm \Gamma$ &    Diagonal matrix with  $\bm \gamma$ as it's diagonal entries     \\ \hline
$\bm \gamma_t$   &  
Hyperparameter vector of the $M_t$ users who would  \\
& have transmitted in the $t$th RB had they been active  \\ \hline
$\bm \Gamma_t$ &    Diagonal matrix with  $\bm \gamma_t$ as it's diagonal entries     \\ \hline
$\bm \gamma^j / \bm \gamma_t^j / \bm \Gamma_t^j $ &    Hyperparameters in the $j$th MSBL iteration     \\ \hline
$\tilde{\bm \gamma}_t^j $ &  Auxiliary variable used to store  $\bm \gamma^j $   \\ \hline
$ \gamma_{\rm pr}$  &  Threshold used to declare support  \\ \hline
\end{tabular}%
}
\label{table_notation}
\end{table}

Let $\mathcal{G}_t = \{ i \in [M] \ | \ g_{ti} = 1 \}$ be the set of users who would have transmitted in the $t$th RB, had they all been active in the current frame, and $M_t = | \mathcal{G}_t |$ be the number of such users.
Since the BS has knowledge of $\mathcal{G}_t$, it can remove the contributions of the users who do not transmit in the $t$th RB.
We thus obtain a column-reduced pilot matrix $\mathbf{P}_t \triangleq [\mathbf{P}]_{:,\mathcal{G}_{t}} \in \mathbb{C}^{\tau \times M_t}$ and a row-reduced channel matrix $\mathbf{Z}_t \triangleq [\mathbf{X}_t]_{\mathcal{G}_{t},:}  \in \mathbb{C}^{M_t \times N}$ in the $t$th RB.
Hence, \eqref{eqn_uad_ori} can be rewritten~as
\begin{align}
\underbrace{\overline{\mathbf{Y}}_t}_{\tau \times N} &= \underbrace{\left[ \mathbf{p}_{{i_1}}, \ldots, \mathbf{p}_{{i_{M_t}}} \right]}_{\mathbf{P}_t} 
\underbrace{\begin{bmatrix}
{a_{{i_1}}}   \mathbf{h}_{t{i_1}}^{H} \\
\vdots \\
{a_{{i_{M_t}}}}  \mathbf{h}_{t{i_{M_t}}}^{H} 
\end{bmatrix}}_{\mathbf{Z}_t}
+ \underbrace{\overline{\mathbf{N}}_t}_{\tau \times N}, \label{eqn_uad_Zt}
\end{align}
where $\mathcal{G}_t = \{ i_1, i_2, \ldots, i_{M_t} \}$.
The above results in an under-determined system of linear equations in the MMTC regime (since $\tau \ll M_t \ll M$), with rows of $\mathbf{Z}_t$ being either all zero or all nonzero.
The columns of $\mathbf{Z}_t$ thus share a common support, i.e., they are joint-sparse.
This structure is referred to as a multiple measurement vector (MMV) recovery problem in compressed sensing.
Note that the above step reduces the dimension of the matrix to be estimated, but does not solve the UAD problem.
The support recovery of $\mathbf{Z}_t$ from \eqref{eqn_uad_Zt} can be performed with well known algorithms from compressed sensing literature to recover the activity coefficients in the each of the $T$ RBs.
By doing so, we would obtain an RB-specific activity estimate for each user. 
However, the activity coefficient for any user is the same across the $T$ RBs, and thus we need a way to infer $\{a_i\}$ using information available in all $T$ RBs.
One na\"{i}ve way to do this is to declare users to be active in the current frame if they are detected to be active in at least $\kappa$ RBs, where $\kappa$ is a parameter that can be optimized.
As we will see, this leads to very poor performance compared to the algorithm presented in the sequel. 
In the following paragraph, we briefly discuss Multiple sparse Bayesian learning (MSBL) \cite{ref_wipf_tsp_2007}, which sets the stage for presenting our enhancement that combines the information gleaned from each RB in a principled manner.
The notation we will now use is described in Table~\ref{table_notation}.

MSBL is an empirical Bayesian algorithm that recovers the joint-sparse columns of $\mathbf{Z}_t$ from linear underdetermined measurements $\overline{\mathbf{Y}}_t$.
In MSBL, a hierarchical Gaussian prior is imposed on the columns of $\mathbf{Z}_t$ as
\begin{align}
p(\mathbf{Z}_t; \bm \gamma_t ) = \textstyle{\prod\nolimits_{n=1}^N} p([\mathbf{Z}_t]_{:,n}; \bm \gamma_t) = \textstyle{\prod\nolimits_{n=1}^N} \mathcal{CN}(\mathbf{0}_{M_t}, \bm \Gamma_t), \label{eqn_Zt_dist}
\end{align}
where the columns of $\mathbf{Z}_t$ are i.i.d. and $\bm \Gamma_t =$ diag$(\bm \gamma_t)$.
Here, the elements of $\bm \gamma \in \mathbb{R}_+^M$ are unknown hyperparameters and $\bm \gamma_t \triangleq [\bm \gamma]_{\mathcal{G}_t} \in \mathbb{R}_+^{M_t} $ picks the hyperparameters for the users who would have transmitted in the $t$th RB had they been active in the current frame. 
Such a hierarchical Bayesian model is known to result in sparse solutions for the maximum likelihood estimates of $\bm \gamma_t$~\cite{ref_tipping_2001_jmlr,ref_wipf_tsp_2007}.
Recovering the hyperparameters would yield the users' activities since $[\bm \gamma]_m$ models the variance of the $m$th user's channel.
The hyperparameters in $\bm \gamma_t$ are estimated by maximizing the log-likelihood $\log p(\overline{\mathbf{Y}}_t; \bm \gamma_t)$, with $p(\overline{\mathbf{Y}}_t; \bm \gamma_t ) = \prod_{n=1}^N p([\overline{\mathbf{Y}}_t]_{:,n}; \bm \gamma_t )$.
Here, $p([\overline{\mathbf{Y}}_t]_{:,n}; \bm \gamma_t ) = \mathcal{CN}(\mathbf{0}_{\tau}, \bm \Sigma_{{\bm{\gamma}_t}})$ because of the linear measurement model in \eqref{eqn_uad_Zt}, with $\bm \Sigma_{\bm \gamma_t} = N_0 \mathbf{I}_{\tau} + \mathbf{P}_t \bm \Gamma_t \mathbf{P}_t^H$.
Thus, the log-likelihood reads~as
\begin{align}
\log (p(\overline{\mathbf{Y}}_t; \bm \gamma_t )) \propto -N \log |\bm \Sigma_{\bm \gamma_t} | - \text{Tr} (\bm \Sigma_{\bm \gamma_t}^{-1} \overline{\mathbf{Y}}_t \overline{\mathbf{Y}}_t^H).
\end{align}
This is a non-convex function of $\bm \gamma_t$, and its maximizer cannot be found in closed form.
In MSBL,  expectation maximization (EM) is used to optimize the cost function iteratively~\cite{ref_dempster_jrss_1977}.

Let $j$ denote the iteration index in EM.
In the $j$th MSBL iteration, the notations $\bm \gamma_t^j \triangleq [\bm \gamma^j]_{\mathcal{G}_t}$  and $[\bm \gamma_t^j]_{i}$ denote the hyperparameter vector of the users in the set denoted by $\mathcal{G}_t$ and the $i$th entry of $\bm \gamma_t^j$, respectively.
The EM procedure consists of two steps in each iteration.
The first step, termed the E-step,  updates the covariance $\bm \Sigma_t^{j+1}$ and mean $\bm \mu_{tn}^{j+1}$ of the posterior $p([\mathbf{Z}_t]_{:,n}|[\mathbf{Y}_t]_{:,n}, \bm \gamma_t^j)$ as~\cite{ref_tipping_2001_jmlr} 
\begin{align}
\bm \Sigma_t^{j+1} &= \bm \Gamma_t^j  - \bm \Gamma_t^j  \mathbf{P}_t^H (N_0 \mathbf{I}_{\tau}  + \mathbf{P}_t  \bm \Gamma_t^j   \mathbf{P}_t^H)^{-1}  \mathbf{P}_t \bm \Gamma_t^j, \label{eqn_Estep1}\\
\bm \mu_{tn}^{j+1} &= N_0^{-1} \bm \Sigma_t^{j+1}  \mathbf{P}_t^H  [ \overline{\mathbf{Y}}_t ]_{:,n}, \ n \in [N]. \label{eqn_Estep2}
\end{align}
The second step, known as the M-step, updates the hyperparameter for the $i$th user in the $t$th RB as
\begin{align}
\textstyle{[{ \bm \gamma}_{t}^{j+1}]_i = \dfrac{1}{N} \textstyle{\sum\limits_{n = 1}^N} ( [ \bm \Sigma_t^{j+1}]_{i,i} + |[\bm \mu_{tn}^{j+1}]_i|^2 ), \ i \in [M_t].} \label{eqn_Mstep1}
\end{align}
This M-step estimates the variance of the channel of the $i$th user in the $t$th RB, and this hyperparameter update contains the information from the $t$th RB only.
The above two steps are performed for all $T$ RBs.
Before the next E-step, the hyperparameter updates across the RBs must be combined.

\subsection{The Proposed UAD Algorithm}
The main novelty of our UAD algorithm lies in exploiting the access pattern coefficients across RBs to find a single hyperparameter update, which we term as the new M-step.
For this purpose, let $\tilde{ \bm \gamma}_t^{j+1} \in \mathbb{R}_+^M$ be an auxiliary variable for the $t$th RB that is updated as $[\tilde{\bm \gamma}^{j+1}_{t}]_{\mathcal{G}_t} = { \bm \gamma}_{t}^{j+1} \in \mathbb{R}_+^{M_t}$ and zero otherwise.
The hyperparameter update for the $m$th user is obtained at the BS by combining the estimated hyperparameters for that user across all the RBs using the knowledge of $g_{tm}$ as
\begin{align}
[{ \bm \gamma}^{j+1}]_m = \dfrac{1}{d_m} \textstyle{\sum\limits_{t = 1}^T}  g_{tm} [\tilde{\bm \gamma}^{j+1}_{t}]_m, \ m \in [M]. \label{eqn_Mstep2}
\end{align}
Here $d_m = \sum_{t=1}^T g_{tm}$ is the repetition factor of the $m$th user.
Note that, in conjunction with \eqref{eqn_Mstep1}, this new M-step is equivalent to executing an M-step that maximizes the overall log-likelihood, $\sum_{t=1}^T \log p(\overline{\mathbf{Y}}_t; \bm \gamma_t)$, based on the knowledge of the APM at the BS.
Effectively, since it estimates the variance of the channel of the $m$th user by averaging the estimated variances of the channels in each RB, it combines the information obtained from all RBs in computing the hyperparameter update.
By iterating between the E- and M-steps, the EM algorithm converges to a saddle point or a local maximizer of the overall log-likelihood~\cite{ref_dempster_jrss_1977}. 
Further, the EM procedure has been empirically shown to {correctly} recover the support of $\mathbf{Z}_t$, provided $\tau$ and $N$ are large enough~\cite{ref_wipf_tsp_2007}. 
In turn, this leads to significantly lower false positive and false negative rates in UAD, as we will empirically show~later.

\begin{algorithm}[t]
\DontPrintSemicolon
\SetAlgoLined
\SetKwInOut{Input}{Input}
\Input{$ \tau, N, T, M, N_0, \mathbf{G}, \mathbf{P}, \{ \overline{\mathbf{Y}}_t\}_{t=1}^T, \{d_m \}_{m=1}^M, \gamma_{\text{pr}}, j_{\max} $} 
\text{Initialize:} \thinspace $ \bm \gamma^0 = \bm 1_{M} $  \\
\text{Compute:} $ \mathcal{G}_t = \{ m \in [M] \ | \ g_{tm} = 1 \}, \ M_t = |\mathcal{G}_t|, $  $\mathbf{P}_t = [\mathbf{P}]_{:,\mathcal{G}_t} , t \in [T]$ \\
\For{$j = 0,1,2,\ldots, j_{\max}$}{
\For{$t = 1,2,\ldots, T$}{
\text{Compute:} $\bm \Gamma_t^j = \text{diag} ([ \bm \gamma^j]_{\mathcal{G}_t})$ 
\begin{align*}
\bm \Sigma_t^{j+1} &= \bm \Gamma_t^j  - \bm \Gamma_t^j  \mathbf{P}_t^H (N_0 \mathbf{I}_{\tau}  + \mathbf{P}_t  \bm \Gamma_t^j  \mathbf{P}_t^H)^{-1}  \mathbf{P}_t \bm \Gamma_t^j  \\
\bm \mu_{tn}^{j+1} &= N_0^{-1} \bm \Sigma_t^{j+1}  \mathbf{P}_t^H  [ \overline{\mathbf{Y}}_t ]_{:,n}, \ 1 \leq n \leq N 
\end{align*} \\
$ \textstyle{[{ \bm \gamma}_{t}^{j+1}]_i = \dfrac{1}{N} \sum\limits_{n = 1}^N \! ( [ \bm \Sigma_t^{j+1}]_{i,i} + |[\bm \mu_{tn}^{j+1}]_i|^2 ) , \!  i \in [M_t] }$ \\
$[\tilde{\bm \gamma}^{j+1}_{t}]_{\mathcal{G}_t} = { \bm \gamma}_{t}^{j+1}, \ [\tilde{\bm \gamma}^{j+1}_{t}]_{[M] \setminus \mathcal{G}_t}= \bm 0_{M - M_t} $ \\
}
$\textstyle{ [{ \bm \gamma}^{j+1}]_m = \frac{\sum\nolimits_{t = 1}^T g_{tm} [\tilde{\bm \gamma}^{j+1}_{t}]_m  }{\sum\nolimits_{t = 1}^T  g_{tm} } , \ 1 \leq m \leq M }$
} 
$ \textbf{Output:} \quad \hat{{a}}_m = 
\begin{cases}
1, \  [{ \bm \gamma}^{j_{\max}}]_m \geq \gamma_{\text{pr}} \\
0, \  [{ \bm \gamma}^{j_{\max}}]_m < \gamma_{\text{pr}}
\end{cases} \hspace{-2mm}, \ 1 \leq m \leq M $, 
$ \hat{\mathbf{Z}}_{t} = [ \bm \mu_{t1}^{j_{\max}} \bm \mu_{t2}^{j_{\max}} \ldots \bm \mu_{tN}^{j_{\max}} ], \ 1 \leq t \leq T $ 
\caption{UAD in IRSA}
\label{algo_uad}
\end{algorithm}

The overall UAD procedure is summarized in Algorithm~\ref{algo_uad}.
The algorithm is run for $j_{\max}$  iterations.
As the iterations proceed, the hyperparameters corresponding to  inactive users converge to zero, resulting in sparse estimates. 
At the end of the EM iterations, the estimated activity coefficient $\hat{a}_m$ for the $m$th user is obtained by thresholding $[{ \bm \gamma}^{j_{\max}}]_m$ at a value $\gamma_{\text{pr}}$.
The algorithm also outputs the MAP estimates of the channels $\hat{\mathbf{X}}_{t}$ in each of the $T$ RBs with $[\hat{\mathbf{X}}_{t}]_{\mathcal{G}_t, :} = \hat{\mathbf{Z}}_{t}$ and $[\hat{\mathbf{X}}_{t}]_{[M] \setminus \mathcal{G}_t, :} = \bm 0_{(M - M_t) \times N} $, and the channel estimates of users across all RBs are stacked in $\hat{\mathbf{X}} = [\hat{\mathbf{X}}_{1}, \ldots, \hat{\mathbf{X}}_{T} ]$.

We now discuss the complexity of our algorithm in terms of the number of floating point operations (flops).
Each MSBL iteration has $\mathcal{O}(\tau^2M)$ flops, if the pilot matrix is of size $\tau \times M$~\cite{ref_wipf_tsp_2007}.
In our algorithm, each iteration contains $T$ RBs, where the size of the reduced pilot matrix is $\tau \times M_t$ in the $t$th RB.
Also, the new M-step has lower complexity order than the E-step.
Thus, the overall per-iteration complexity of Algorithm~\ref{algo_uad} is $\mathcal{O}(\tau^2M_S)$, where $M_S = \sum\nolimits_{t=1}^T M_t \approx \bar{d}M$, where $\bar{d}$ is the average repetition factor.

Based on the estimated activity $ \hat{a}_i $ and the true activity $a_i$, the set of all users can be divided into four disjoint subsets
\begin{subequations}
\begin{align} 
\mathcal{A} &= \{i \in [M] \ | \ \hat{a}_i a_i = 1 \}, \\ 
\mathcal{F} &= \{i \in [M] \ | \ \hat{a}_i (1 -a_i) = 1 \}, \\
\mathcal{M} &= \{i \in [M] \ | \ (1 - \hat{a}_i) a_i = 1 \}, \\ 
\mathcal{I} &= \{i \in [M] \ | \ (1 - \hat{a}_i) (1 -a_i) = 1 \}. 
\end{align}
\end{subequations}
$\mathcal{A}$ is the true positive set of users, i.e., the users that are correctly detected to be active. 
$\mathcal{F}$ is the false positive set of users, i.e., the users that are detected to be active and are truly inactive. 
$\mathcal{M}$ is the false negative set of users, i.e., the users that are detected to be inactive, but are actually active. 
$\mathcal{I}$ is the true negative set of users, i.e., the users that are correctly detected to be inactive. 
False positive and false negative users together form the errors in the UAD process, and the error rates for such users are discussed in Sec. \ref{sec_results}.
After the active users are detected, the next task is to estimate the channels from the active users. However, before describing channel estimation, we take a small detour to explain why traditional compressed sensing approaches are not effective for frame-based UAD in IRSA-based multiple access.

\subsection{Why One-Shot UAD Does Not Work} \label{sec_one_shot_uad}
By stacking the received signal in \eqref{eqn_uad_ori} across all RBs, we can estimate the user activity coefficients in one-shot across all RBs.
We now briefly explain why this performs poorly.
The received pilot signals in all RBs can be stacked as
\begin{align}
&\overline{\mathbf{Y}}
= [\overline{\mathbf{Y}}_1 \overline{\mathbf{Y}}_2 \ldots \overline{\mathbf{Y}}_T]
= \mathbf{PX} + [\overline{\mathbf{N}}_1 \overline{\mathbf{N}}_2 \ldots \overline{\mathbf{N}}_T] , \nonumber  \\
\mathbf{X}
&= \begin{bmatrix}
{a_{1}} g_{11}  \mathbf{h}_{11}^{H} & \ldots & {a_{1}} g_{T1}  \mathbf{h}_{T1}^{H} \\
\vdots & \ddots & \vdots \\
{a_M} g_{1M} \mathbf{h}_{1M}^{H} & \ldots & {a_M} g_{TM} \mathbf{h}_{TM}^{H} 
\end{bmatrix}  \in \mathbb{C}^{M \times NT}.
\label{eqn_one_shot}
\end{align}

The above structure is \emph{not} an MMV recovery problem because the rows of $ \mathbf{X}$ are not completely all zero or all nonzero.
If the $i$th user is inactive, then the $i$th row of $ \mathbf{X}$ is all zero. 
However if the $i$th user is active, then the $i$th row of $ \mathbf{X}$ is not all nonzero.
Only the blocks of the $i$th row corresponding to the RBs in which the $i$th user has transmitted in (i.e., where $g_{ti} = 1$) are all nonzero and the other blocks are all zero.
Since IRSA results in the transmission of replicas in only a small subset of the $T$ RBs, only a few blocks of the $i$th row are nonzero.
Different blocks of each row of $\mathbf{X}$ corresponding to active users have different block-sparse supports.
If an MMV recovery algorithm is applied across all RBs in one shot as in \eqref{eqn_one_shot}, a pilot length of $\tau = \Omega ( M_a \log \frac{M}{M_a} ) $ can achieve a vanishing activity error rate as $N \rightarrow \infty$,  where  $M_a$ is the average number of active users in each RB~\cite{ref_tang_tit_2010}.
For example, with $M=1500$ and $M_a=150$, $\tau = \Omega (346)$ achieves vanishing error rates in a massive MIMO regime.
These pilot lengths are infeasible in practice, and thus, in practical regimes of interest, one-shot UAD performs poorly.

\section{Channel Estimation} \label{sec_ch_est}
In addition to performing UAD, Algorithm~\ref{algo_uad} also outputs an initial channel estimate for each user that is detected to be active, as a by-product. 
However, as the decoding iterations proceed, the interference cancellation can help improve the accuracy of the channel estimates, when the channels of the remaining users are re-estimated after each SIC operation.
We now derive MMSE channel estimates in each decoding iteration for all the users that have been detected to be active. 
MMSE channel estimation is also required to compute meaningful expressions for the SINR~\cite{ref_hassibi_tit_2003}.

Since MMSE estimates are recomputed in every iteration, the signals and channel estimates are indexed by the decoding iteration $k$.
Let the set of users who have not yet been decoded in the first $k-1$ iterations be denoted by $\mathcal{S}_{k}$, with $\mathcal{S}_k^m \triangleq \mathcal{S}_k \setminus \{m\}$, and $\mathcal{S}_{1} = [M]$.
The received pilot signal at the BS, in the $t$th RB during the $k$th decoding iteration, is
\begin{align}
{\mathbf{Y}_t^{{\tt{p}}k}} &= \textstyle{\sum\nolimits_{i \in \mathcal{S}_k}}  a_i g_{ti} \mathbf{h}_{ti} \mathbf{p}_{i}^{H} + {\mathbf{N}_t^{\tt{p}}}.
\end{align}
In this section, we assume perfect SIC for simplicity of analysis; we study the performance variation under imperfect SIC in Sec.~\ref{sec_results}.
This received signal is contributed from all users who are truly active in the current frame. 
The BS wishes to compute channel estimates for users who are detected to be active, i.e.,  for the users in $\hat{\mathcal{A}} = \{ i \in [M] \ | \ \hat{a}_i = 1 \}$, which is output by Algorithm \ref{algo_uad}.
For this purpose, the received pilot signal is right combined with the pilot $\mathbf{p}_m, \ \forall m \in \hat{\mathcal{A}} \cap \mathcal{G}_t \cap \mathcal{S}_{k} $, to obtain the post-combining pilot signal as
\begin{align}  
\mathbf{y}_{tm}^{{\tt{p}}k} &= {\mathbf{Y}_t^{{\tt{p}}k}} \mathbf{p}_{m} 
= \textstyle{\sum\nolimits_{i \in \mathcal{S}_k}}  a_i g_{ti} \mathbf{h}_{ti} (\mathbf{p}_{i}^{H} \mathbf{p}_{m} ) + {\mathbf{N}_t^{\tt{p}}} \mathbf{p}_{m}, \label{eqn_rec_postcomb_pilot}
\end{align}
which is further used for estimating the channel between the BS and the user in the $t$th RB \cite{ref_bjornson_mimo_2017}.
The MMSE channel estimate is given by the following theorem.

\begin{thm} \label{thm_ch_est}
The MMSE estimate $ \hat{\mathbf{h}}_{tm}^k$ of the channel $ {\mathbf{h}}_{tm}$ is calculated from the post-combining pilot signal as
\begin{align}
\hat{\mathbf{h}}_{tm}^{k} =  \eta_{tm}^{k} \mathbf{y}_{tm}^{{\tt{p}}k}, \quad \forall m \in \mathcal{S}_k, \label{eqn_ch_est}
\end{align}
where $\eta_{tm}^{k} \triangleq \dfrac{ \hat{a}_m g_{tm} \beta_m \sigma_{\tt{h}}^2 \| \mathbf{p}_m \|^2 }{ N_0 \| \mathbf{p}_m \|^2 + \sum\nolimits_{i \in \mathcal{S}_k} \hat{a}_i g_{ti} \beta_i \sigma_{\tt{h}}^2 | \mathbf{p}_i^H  \mathbf{p}_m|^2 } $.
Further,  the estimation error $ \tilde{\mathbf{h}}_{tm}^k \triangleq \hat{\mathbf{h}}_{tm}^k - {\mathbf{h}}_{tm}$ is uncorrelated with the channel ${\mathbf{h}}_{tm}$, and is distributed as $\mathcal{CN}(\mathbf{0}_N, \delta_{tm}^k \mathbf{I}_N)$.
Here, $\delta_{tm}^{k}$  is the estimation error variance and is given by
\begin{align*}
\delta_{tm}^{k} = \beta_m \sigma_{\tt{h}}^2  \left( \frac{  \sum\nolimits_{i \in \mathcal{S}_k^m} |\mathbf{p}_{i}^{H} \mathbf{p}_{m}|^2 \hat{a}_i a_i g_{ti} \beta_i \sigma_{\tt{h}}^2 + N_0 \|\mathbf{p}_{m}\|^2 }{ \sum\nolimits_{i \in \mathcal{S}_k} |\mathbf{p}_{i}^{H} \mathbf{p}_{m}|^2 \hat{a}_i a_i g_{ti} \beta_i \sigma_{\tt{h}}^2 + N_0 \|\mathbf{p}_{m}\|^2 } \right).
\end{align*}
\end{thm}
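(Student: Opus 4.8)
The plan is to treat this as a linear minimum mean-squared-error (LMMSE) estimation problem and exploit the fact that, conditioned on the pilot matrix and the activity pattern, the channels $\{\mathbf{h}_{ti}\}_i$ and the noise ${\mathbf{N}_t^{\tt{p}}}$ are jointly zero-mean circularly-symmetric complex Gaussian, mutually independent, and i.i.d.\ across the $N$ antennas. Consequently the conditional mean of $\mathbf{h}_{tm}$ given the post-combining observation $\mathbf{y}_{tm}^{{\tt{p}}k}$ in \eqref{eqn_rec_postcomb_pilot} is linear in $\mathbf{y}_{tm}^{{\tt{p}}k}$, and since every relevant covariance is a scalar multiple of $\mathbf{I}_N$ (the channel covariances are $\beta_i\sigma_{\tt{h}}^2\mathbf{I}_N$, and by $\|\mathbf{p}_m\|^2$-scaling of i.i.d.\ noise the combined-noise covariance is $N_0\|\mathbf{p}_m\|^2\mathbf{I}_N$), the LMMSE ``matrix gain'' is a scalar $\eta_{tm}^k$ times $\mathbf{I}_N$. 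This immediately yields the form $\hat{\mathbf{h}}_{tm}^k=\eta_{tm}^k\mathbf{y}_{tm}^{{\tt{p}}k}$ in \eqref{eqn_ch_est} and reduces the theorem to (i) identifying the scalar $\eta_{tm}^k$ and (ii) computing the error covariance. An equivalent route is to write down the Gaussian posterior $p(\mathbf{h}_{tm}\mid\mathbf{y}_{tm}^{{\tt{p}}k})$ directly; I would lead with the LMMSE/orthogonality viewpoint since it makes the error analysis cleaner.

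For (i), I would invoke the orthogonality principle for the model the BS actually uses, namely $\mathbf{y}_{tm}^{{\tt{p}}k}=\sum_{i\in\mathcal{S}_k}\hat a_i g_{ti}(\mathbf{p}_i^H\mathbf{p}_m)\mathbf{h}_{ti}+{\mathbf{N}_t^{\tt{p}}}\mathbf{p}_m$, since the BS can only use its detected activities $\hat a_i$. Writing $\eta_{tm}^k$ as the ratio of the cross-covariance $\mathbb{E}[\mathbf{h}_{tm}(\mathbf{y}_{tm}^{{\tt{p}}k})^H]$ to the auto-covariance $\mathbb{E}[\mathbf{y}_{tm}^{{\tt{p}}k}(\mathbf{y}_{tm}^{{\tt{p}}k})^H]$, the cross-covariance retains only the $i=m$ term and equals $\hat a_m g_{tm}\|\mathbf{p}_m\|^2\beta_m\sigma_{\tt{h}}^2\,\mathbf{I}_N$, while independence across users and of the noise makes the auto-covariance $\big(N_0\|\mathbf{p}_m\|^2+\sum_{i\in\mathcal{S}_k}\hat a_i g_{ti}\beta_i\sigma_{\tt{h}}^2|\mathbf{p}_i^H\mathbf{p}_m|^2\big)\mathbf{I}_N$; the quotient is exactly the stated $\eta_{tm}^k$.

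For (ii), I would form $\tilde{\mathbf{h}}_{tm}^k=\eta_{tm}^k\mathbf{y}_{tm}^{{\tt{p}}k}-\mathbf{h}_{tm}$ using the \emph{actual} signal \eqref{eqn_rec_postcomb_pilot} (with the true $a_i$, and with the perfect-SIC assumption fixing the residual user set as $\mathcal{S}_k$), giving $\tilde{\mathbf{h}}_{tm}^k=(\eta_{tm}^k a_m g_{tm}\|\mathbf{p}_m\|^2-1)\mathbf{h}_{tm}+\eta_{tm}^k\sum_{i\in\mathcal{S}_k^m}a_i g_{ti}(\mathbf{p}_i^H\mathbf{p}_m)\mathbf{h}_{ti}+\eta_{tm}^k{\mathbf{N}_t^{\tt{p}}}\mathbf{p}_m$. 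Being a non-conjugate linear combination of independent zero-mean circularly-symmetric complex Gaussian vectors whose covariances are all proportional to $\mathbf{I}_N$, $\tilde{\mathbf{h}}_{tm}^k$ is itself $\mathcal{CN}(\mathbf{0}_N,\delta_{tm}^k\mathbf{I}_N)$ for a scalar $\delta_{tm}^k$; the ``uncorrelated'' claim is the orthogonality property $\mathbb{E}[\tilde{\mathbf{h}}_{tm}^k(\hat{\mathbf{h}}_{tm}^k)^H]=\mathbf{0}_{N\times N}$ that is built into the LMMSE estimate. It then remains to evaluate $\delta_{tm}^k=\tfrac1N\mathbb{E}\|\tilde{\mathbf{h}}_{tm}^k\|^2$: summing the three independent variance contributions, substituting $\eta_{tm}^k$, and separating the $i=m$ term of $\eta_{tm}^k$'s denominator (which on the relevant set of detected-and-truly-active users equals $\hat a_m a_m g_{tm}\beta_m\sigma_{\tt{h}}^2\|\mathbf{p}_m\|^4$), the three terms collapse onto the common denominator $\big(N_0\|\mathbf{p}_m\|^2+\sum_{i\in\mathcal{S}_k}\hat a_i a_i g_{ti}\beta_i\sigma_{\tt{h}}^2|\mathbf{p}_i^H\mathbf{p}_m|^2\big)^2$, after which the numerator factors to yield the displayed $\delta_{tm}^k$.

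The main obstacle is the careful bookkeeping of the two distinct activity vectors — the detected $\{\hat a_i\}$, which parametrizes the estimator $\eta_{tm}^k$, versus the true $\{a_i\}$, which determines which channels actually appear in $\mathbf{y}_{tm}^{{\tt{p}}k}$ — together with the perfect-SIC assumption fixing the residual set $\mathcal{S}_k$; this interplay is precisely what produces the products $\hat a_i a_i$ in $\delta_{tm}^k$ (a user inflates the error variance only when it is both detected active \emph{and} actually transmitting), as opposed to the simpler expression a perfectly-matched estimator would give. Everything else is routine covariance algebra for independent Gaussian vectors.
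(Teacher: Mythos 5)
Your proposal follows essentially the same route as the paper's Appendix~A: both compute the Gaussian/LMMSE conditional mean from the cross- and auto-covariances of $\mathbf{h}_{tm}$ and $\mathbf{y}_{tm}^{{\tt p}k}$ (all proportional to $\mathbf{I}_N$, hence a scalar gain $\eta_{tm}^k$), and both obtain $\delta_{tm}^k$ from second-order statistics of the error --- the paper via the conditional-covariance formula $\mathbf{K}_{\mathbf{xx}|\mathbf{z}}=\mathbf{K}_{\mathbf{xx}}-\mathbf{K}_{\mathbf{xz}}\mathbf{K}_{\mathbf{zz}}^{-1}\mathbf{K}_{\mathbf{zx}}$, you by expanding the error into three independent Gaussian contributions, which is mechanically equivalent. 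Your final bookkeeping step, identifying the $\hat a_i$-denominator of $\eta_{tm}^k$ with the $\hat a_i a_i$-denominator appearing in $\delta_{tm}^k$, is made with exactly the same implicit identification in the paper's own proof, so your argument is faithful to it.
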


\begin{proof}
See Appendix \ref{appendix_ch_est}.
\end{proof}

\noindent \emph{Remark 1}: The channel estimate is composed of a scaling coefficient $\eta_{tm}^{k}$ and the post-combining pilot signal $\mathbf{y}_{tm}^{{\tt{p}}k}$. 
$\eta_{tm}^{k}$ is computed at the BS and is a function of the estimated activity coefficients $\hat{a}_i$. 
Thus, false positive users feature in the denominator of $\eta_{tm}^{k}$ and affect the channel estimates of other users.
The BS also computes channel estimates for these false positive users.\footnote{Since false positive users will fail an error check, the BS can potentially try to identify such users as data decoding proceeds and compute better quality channel estimates.
However, we make no such assumption, and thus, our channel estimation procedure models a worst-case scenario where false positive users contaminate the channel estimates of other true positive users.}
Since false negative users are detected to be inactive, the BS does not account for the interference caused by them while computing $\eta_{tm}^{k}$.
From \eqref{eqn_rec_postcomb_pilot},  $\mathbf{y}_{tm}^{{\tt{p}}k}$ contains signals from other truly active users if pilots are not orthogonal, and is thus a function of the true activity coefficients ${a}_i$. 
Also, false negative users contribute to $\mathbf{y}_{tm}^{{\tt{p}}k}$, and thus both types of errors affect the estimates of other users.

\noindent \emph{Remark 2}: 
In the above theorem, $\delta_{tm}^{k}$ accounts for the pilot contamination from other true positive users.
False positive users are omitted from the expression for $\delta_{tm}^{k}$ because such users do not contaminate the pilots of other users.
Only true positive users contribute to $\delta_{tm}^{k} $.
When orthogonal pilots are used, $\mathbf{p}_i^H \mathbf{p}_m = 0, \forall i \neq m$, there is no pilot contamination, and thus $\delta_{tm}^k = \beta_m \sigma_{\tt{h}}^2 N_0/(\hat{a}_m a_m g_{tm} \beta_m \sigma_{\tt{h}}^2 \| \mathbf{p}_m \|^2 + N_0 )$.

\subsection{Cram\'{e}r-Rao Bound} \label{sec_crlb}
In this subsection, we derive the Cram\'{e}r-Rao bound (CRB) \cite{ref_prasad_tsp_2013} on the mean squared error (MSE) of the channel estimated under the hierarchical Bayesian model given by \eqref{eqn_Zt_dist}. 
The signal $\overline{\mathbf{Y}}_t = \mathbf{P}_t \mathbf{Z}_t +  \overline{\mathbf{N}}_t $ from \eqref{eqn_uad_Zt} is first vectorized as 
\begin{align}
\underbrace{\overline{\mathbf{y}}_t}_{N\tau \times 1} &\triangleq \text{vec} (\overline{\mathbf{Y}}_t ) = \underbrace{\bm{\Phi}_t}_{N\tau \times N M_t} \underbrace{\overline{\mathbf{z}}_t}_{N M_t \times 1} +  \underbrace{\overline{\mathbf{n}}_t}_{N\tau \times 1},
\end{align}
where $ \bm{\Phi}_t \triangleq (\mathbf{I}_N \otimes \mathbf{P}_t)$, $\overline{\mathbf{z}}_t \triangleq \text{vec} (\mathbf{Z}_t)$, and $\overline{\mathbf{n}}_t \triangleq \text{vec} (\overline{\mathbf{N}}_t )$.

After stacking the received pilot signal across all RBs as $ \overline{\mathbf{y}} = [\overline{\mathbf{y}}_1^T, \ldots, \overline{\mathbf{y}}_T^T]^T $, with $ \overline{\mathbf{z}} = [\overline{\mathbf{z}}_1^T, \ldots, \overline{\mathbf{z}}_T^T]^T $, $ \overline{\mathbf{n}} = [\overline{\mathbf{n}}_1^T, \ldots, \overline{\mathbf{n}}_T^T]^T $, and $\bm{\Phi} = \text{blkdiag} \{ \bm{\Phi}_1, \ldots, \bm{\Phi}_T \}$, we obtain
\begin{align}
\overline{\mathbf{y}} = \bm{\Phi} \overline{\mathbf{z}} +  \overline{\mathbf{n}}.
\end{align}
Here, we wish to estimate $\overline{\mathbf{z}} \in \mathbb{C}^{N M_S}$ from an observation $\overline{\mathbf{y}} \in \mathbb{C}^{N T \tau}$ via a measurement matrix $\bm{\Phi} \in \mathbb{C}^{N T \tau \times N M_S}$, with $M_S = \sum\nolimits_{t=1}^T M_t$.
Let $\mathbf{J}$ denote the $NM_S \times NM_S$ Fisher information matrix (FIM) associated with the vector $\overline{\mathbf{z}}$.  It is easy to see that $\mathbf{J} = \text{blkdiag} \{ \mathbf{J}_{1}, \ldots, \mathbf{J}_{T} \} $, where $\mathbf{J}_{t}$ is the $NM_t \times NM_t$ sub-block of the FIM corresponding to the $t$th RB.
Specifically,  the CRB derived in this work is the hybrid Cramér-Rao bound~\cite{ref_prasad_tsp_2013},  which is a bound analogous to the CRB for the estimation problem in MSBL. 
Due to the block diagonal structure of the FIM, the CRB for any estimate $\hat{\overline{\mathbf{z}}}_t$ of $\overline{\mathbf{z}}_t$ is given by
\begin{align} 
\mathbb{E}[ ( \hat{\overline{\mathbf{z}}}_t - \overline{\mathbf{z}}_t ) ( \hat{\overline{\mathbf{z}}}_t - \overline{\mathbf{z}}_t )^H] \succeq \mathbf{J}_t^{-1}. \label{eqn_jtinv}
\end{align}

\begin{thm} \label{thm_crlb}
The sub-block of the FIM associated with the channel vector $\overline{\mathbf{z}}_t =$ vec$(\mathbf{Z}_t)$ in the $t$th RB is given by
\begin{align}
\mathbf{J}_t = \mathbf{I}_N \otimes N_0^{-1} \left( \mathbf{P}_t^H \mathbf{P}_t+ {N_0} {\bm{\Gamma}}_t^{-1} \right),
\end{align}
where $ \bm{\Gamma}_t = {\rm diag} ([\bm \gamma]_{\mathcal{G}_t})$ picks the hyperparameters for the $M_t$ users in the $t$th RB.
Further, the CRB for any estimate $[\hat{\mathbf{Z}}_t]_{:,n}$ of $[{\mathbf{Z}}_t]_{:,n}$ in the $t$th RB across the $n$th antenna is given~by
\begin{align}
&\mathbb{E}[ ( [\hat{\mathbf{Z}}_t]_{:,n} - [{\mathbf{Z}}_t]_{:,n} ) ( [\hat{\mathbf{Z}}_t]_{:,n} - [{\mathbf{Z}}_t]_{:,n} )^H] \nonumber \\
& \quad \quad \quad \succeq N_0 \left( \mathbf{P}_t^H \mathbf{P}_t + {N_0}{\bm{\Gamma}}_t^{-1} \right)^{-1}, \quad 1 \leq n \leq N. \label{eqn_crlb}
\end{align}
\end{thm}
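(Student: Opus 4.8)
The plan is to compute the hybrid Fisher information matrix (FIM) directly from the vectorized linear-Gaussian model $\overline{\mathbf{y}}_t = \bm{\Phi}_t \overline{\mathbf{z}}_t + \overline{\mathbf{n}}_t$, where the ``data'' part of the likelihood is $\overline{\mathbf{n}}_t \sim \mathcal{CN}(\mathbf{0}, N_0 \mathbf{I}_{N\tau})$ and the ``prior'' part is $\overline{\mathbf{z}}_t \sim \mathcal{CN}(\mathbf{0}, \mathbf{I}_N \otimes \bm{\Gamma}_t)$, since the columns of $\mathbf{Z}_t$ are i.i.d. $\mathcal{CN}(\mathbf{0}_{M_t}, \bm{\Gamma}_t)$ by \eqref{eqn_Zt_dist}. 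The hybrid CRB decomposes the FIM as $\mathbf{J}_t = \mathbf{J}_t^{\rm D} + \mathbf{J}_t^{\rm P}$, the sum of the expected data-information and the prior-information terms. First I would compute $\mathbf{J}_t^{\rm D} = \mathbb{E}\!\left[ -\partial^2 \log p(\overline{\mathbf{y}}_t \mid \overline{\mathbf{z}}_t) / \partial \overline{\mathbf{z}}_t \partial \overline{\mathbf{z}}_t^H \right]$; for a complex linear-Gaussian observation with noise covariance $N_0 \mathbf{I}$ this is the standard result $\mathbf{J}_t^{\rm D} = N_0^{-1} \bm{\Phi}_t^H \bm{\Phi}_t$. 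Substituting $\bm{\Phi}_t = \mathbf{I}_N \otimes \mathbf{P}_t$ and using $(\mathbf{I}_N \otimes \mathbf{P}_t)^H (\mathbf{I}_N \otimes \mathbf{P}_t) = \mathbf{I}_N \otimes (\mathbf{P}_t^H \mathbf{P}_t)$ gives $\mathbf{J}_t^{\rm D} = N_0^{-1} \mathbf{I}_N \otimes (\mathbf{P}_t^H \mathbf{P}_t)$.

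Next I would compute the prior-information term $\mathbf{J}_t^{\rm P} = \mathbb{E}\!\left[ -\partial^2 \log p(\overline{\mathbf{z}}_t) / \partial \overline{\mathbf{z}}_t \partial \overline{\mathbf{z}}_t^H \right]$. Since $p(\overline{\mathbf{z}}_t) = \mathcal{CN}(\mathbf{0}, \mathbf{I}_N \otimes \bm{\Gamma}_t)$, the log-density is (up to constants) $-\overline{\mathbf{z}}_t^H (\mathbf{I}_N \otimes \bm{\Gamma}_t)^{-1} \overline{\mathbf{z}}_t$, whose Hessian with respect to $(\overline{\mathbf{z}}_t, \overline{\mathbf{z}}_t^H)$ is simply the (deterministic) inverse covariance, so $\mathbf{J}_t^{\rm P} = (\mathbf{I}_N \otimes \bm{\Gamma}_t)^{-1} = \mathbf{I}_N \otimes \bm{\Gamma}_t^{-1}$. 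Adding the two contributions and factoring out $N_0^{-1}$ yields
\begin{align*}
\mathbf{J}_t = N_0^{-1} \mathbf{I}_N \otimes \left( \mathbf{P}_t^H \mathbf{P}_t + N_0 \bm{\Gamma}_t^{-1} \right),
\end{align*}
which matches the claimed expression (using $\mathbf{I}_N \otimes N_0^{-1}(\cdot) = N_0^{-1}\mathbf{I}_N \otimes (\cdot)$). The block-diagonal claim $\mathbf{J} = \text{blkdiag}\{\mathbf{J}_1, \ldots, \mathbf{J}_T\}$ follows because $\bm{\Phi}$ is block-diagonal across RBs and the priors on $\overline{\mathbf{z}}_1, \ldots, \overline{\mathbf{z}}_T$ are independent, so cross-RB second derivatives vanish.

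For the per-antenna CRB in \eqref{eqn_crlb}, I would exploit the Kronecker structure: since $\mathbf{J}_t = \mathbf{I}_N \otimes \mathbf{K}_t$ with $\mathbf{K}_t \triangleq N_0^{-1}(\mathbf{P}_t^H \mathbf{P}_t + N_0 \bm{\Gamma}_t^{-1})$, we have $\mathbf{J}_t^{-1} = \mathbf{I}_N \otimes \mathbf{K}_t^{-1}$. Under the ordering where $\overline{\mathbf{z}}_t = \text{vec}(\mathbf{Z}_t)$ stacks the columns $[\mathbf{Z}_t]_{:,n}$, the diagonal $N M_t \times N M_t$ block of $\mathbf{J}_t^{-1}$ corresponding to the $n$th column is exactly $\mathbf{K}_t^{-1} = N_0(\mathbf{P}_t^H \mathbf{P}_t + N_0 \bm{\Gamma}_t^{-1})^{-1}$, and the submatrix inequality in \eqref{eqn_jtinv} restricted to that block gives \eqref{eqn_crlb}. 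The only genuinely delicate point is the hybrid nature of the bound: one must invoke the regularity condition that $\mathbb{E}[\partial \log p(\overline{\mathbf{z}}_t)/\partial \overline{\mathbf{z}}_t] = \mathbf{0}$ (true for the zero-mean Gaussian prior) so that the cross term between the data score and the prior score vanishes and the two information matrices add cleanly; I would state this explicitly by citing \cite{ref_prasad_tsp_2013}. Everything else is routine Gaussian differentiation and Kronecker-product bookkeeping.
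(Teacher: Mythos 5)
Your proposal is correct and takes essentially the same route as the paper's proof: both split the hybrid FIM from \cite{ref_prasad_tsp_2013} into a prior term $\mathbf{I}_N \otimes \bm{\Gamma}_t^{-1}$ and a data term $N_0^{-1}\,\mathbf{I}_N \otimes \mathbf{P}_t^H\mathbf{P}_t$ for the linear-Gaussian model and then read off the per-antenna block from the Kronecker structure; the paper evaluates the two terms as expected outer products of the scores rather than negative expected Hessians, but for this Gaussian model the two forms coincide. The only trivial slip is dimensional: the diagonal block of $\mathbf{J}_t^{-1}$ corresponding to the $n$th antenna is $M_t \times M_t$, not $NM_t \times NM_t$.
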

\begin{proof}
See Appendix \ref{appendix_crlb}.
\end{proof}
\noindent \emph{Remark}: Note that the right hand side in \eqref{eqn_crlb} is independent of the antenna index. Also, from \eqref{eqn_jtinv}, the MSE of any estimate $\hat{\mathbf{Z}}_t$ of ${\mathbf{Z}}_t$ in the $t$th RB can be bounded below by $\text{Tr} ( \mathbf{J}_t^{-1} )$ as
\begin{align}
&\mathbb{E} [ \| \hat{\mathbf{Z}}_t - \mathbf{Z}_t \|_F^2 ] \geq  \text{Tr} \left( \mathbf{I}_N \otimes {N_0} \left( \mathbf{P}_t^H \mathbf{P}_t + {N_0}  {\bm{\Gamma}}_t^{-1} \right)^{-1} \right) \nonumber \\
& \ = N \text{Tr} \left( {\bm{\Gamma}}_t - {\bm{\Gamma}}_t \mathbf{P}_t^H  \left( N_0\mathbf{I}_\tau +  \mathbf{P}_t {\bm{\Gamma}}_t \mathbf{P}_t^H \right)^{-1} \mathbf{P}_t {\bm{\Gamma}}_t \right), \label{eqn_mse_zt}
\end{align}
where the last step is obtained by using the Woodbury matrix identity and Tr$(\mathbf{I}_N \otimes \mathbf{A}) = N $Tr$(\mathbf{A})$.
Considering the signals received across the entire frame, the effective MSE of the estimate $\hat{\mathbf{X}}$ of $\mathbf{X}$ can thus be bounded as
\begin{align}
\text{MSE} &= \mathbb{E} [ \| \hat{\mathbf{X}} - \mathbf{X} \|_F^2 ] =  \textstyle{\sum\nolimits_{t=1}^T} \mathbb{E} [ \| \hat{\mathbf{X}}_t - \mathbf{X}_t \|_F^2 ], \nonumber \\
&= \textstyle{\sum\nolimits_{t=1}^T} \mathbb{E} [ \| \hat{\mathbf{Z}}_t - \mathbf{Z}_t \|_F^2 ] \nonumber \\
&\geq N N_0 \textstyle{\sum\nolimits_{t=1}^T} \text{Tr} \left( \mathbf{P}_t^H \mathbf{P}_t + {N_0}{\bm{\Gamma}}_t^{-1} \right)^{-1}.
\end{align}
The channel variance can be calculated as
\begin{align}
\mathbb{E} [\| \mathbf{X} \|_F^2] &= \textstyle{\sum\nolimits_{t=1}^T} \mathbb{E} [\| \mathbf{X}_t \|_F^2] = \textstyle{\sum\nolimits_{t=1}^T} \text{Tr} (\mathbb{E} [ \mathbf{Z}_t \mathbf{Z}_t^H ]) \nonumber \\
&= \textstyle{\sum\nolimits_{t=1}^T} \text{Tr} (\mathbf{I}_{N} \otimes {\bm{\Gamma}}_t ) = N \textstyle{\sum\nolimits_{t=1}^T} \text{Tr} ({\bm{\Gamma}}_t ) \nonumber \\
&= N \textstyle{\sum\nolimits_{m=1}^M} d_m[{\bm{\gamma}}]_{m}.
\end{align}
Hence, the normalized mean squared error (NMSE) of any channel estimate $\hat{\mathbf{X}}$ of ${\mathbf{X}}$ can be bounded as
\begin{align}
& \text{NMSE} \triangleq  \frac{\mathbb{E} [\| \mathbf{X} -  \hat{\mathbf{X}} \|_F^2]}{\mathbb{E} [\| \mathbf{X} \|_F^2]} \\
& \ \geq \dfrac{N_0}{\textstyle{\sum\nolimits_{m=1}^M} d_m[{\bm{\gamma}}]_{m}}  \textstyle{\sum\limits_{t=1}^T} \text{Tr} \left( \mathbf{P}_t^H \mathbf{P}_t + {N_0}{\bm{\Gamma}}_t^{-1} \right)^{-1}.\label{eqn_norm_crb}
\end{align}

To better understand the above expressions, we consider the case of orthogonal pilots, i.e., $ \mathbf{P}_t^H \mathbf{P}_t = \tau P^{\tt{p}} \mathbf{I}_{M_t} $,  applicable when $\tau \geq M_t, \ \forall t \in [T]$.
In this case, the MSE is bounded~as
\begin{align*}
& \text{MSE} \geq N \textstyle{\sum\limits_{t=1}^T \sum\limits_{i=1}^{M_t}} \left( \frac{ \tau P^{\tt{p}}}{N_0} + [{\bm{\Gamma}}_t^{-1}]_{i,i} \right)^{-1} \\
& \ = N \textstyle{\sum\limits_{m=1}^{M}} d_m \left( \frac{ \tau P^{\tt{p}}}{N_0} + \frac{1}{[{\bm{\gamma}}]_{m}} \right)^{-1} =  N \textstyle{\sum\limits_{m=1}^{M}} \frac{d_m[{\bm{\gamma}}]_{m} }{ 1 + [{\bm{\gamma}}]_{m} \frac{ \tau P^{\tt{p}}}{N_0} },
\end{align*}
and the NMSE can be bounded as
\begin{align}
\text{NMSE} \geq \dfrac{1}{\sum\nolimits_{m=1}^M d_m[{\bm{\gamma}}]_{m}} \textstyle{\sum\limits_{m=1}^{M}} \frac{d_m[{\bm{\gamma}}]_{m} }{ 1 + [{\bm{\gamma}}]_{m} \frac{ \tau P^{\tt{p}}}{N_0} }.
\end{align}
The above bound is for a given set of repetition factors $\{d_m\}$,  hyperparameters $\bm{\gamma}$, the pilot SNR $\frac{ \tau P^{\tt{p}}}{N_0} $, and is independent of the number of antennas $N$.
As $\tau \rightarrow \infty$, the MSE goes to zero.

We now describe an estimator that achieves the CRB.
\begin{lem} \label{lemma_plug_in}
Assuming the knowledge of the true hyperparameters, the CRB is achieved by the MMSE channel estimate:
\begin{align}
\hat{\mathbf{Z}}_t &= \left(  \mathbf{P}_t^H \mathbf{P}_t  + N_0 {\bm{\Gamma}}_t^{-1} \right)^{-1} \mathbf{P}_t^H \overline{\mathbf{Y}}_t. \label{eqn_plug_in}
\end{align}
\end{lem}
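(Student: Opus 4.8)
The plan is to recognize the estimation of $\mathbf{Z}_t$ under the hierarchical model as a linear-Gaussian Bayesian problem, identify \eqref{eqn_plug_in} as its posterior mean, and show that the resulting error covariance coincides exactly with the CRB in \eqref{eqn_crlb}. Since for a conditionally Gaussian posterior the MMSE (posterior-mean) estimator attains the corresponding (hybrid) Cram\'{e}r--Rao bound with equality, this proves the lemma.

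First I would condition on the true hyperparameters $\bm \gamma_t$, so that the columns $[\mathbf{Z}_t]_{:,n}$ are i.i.d. $\mathcal{CN}(\mathbf{0}_{M_t},\bm \Gamma_t)$ and $[\overline{\mathbf{Y}}_t]_{:,n} = \mathbf{P}_t [\mathbf{Z}_t]_{:,n} + [\overline{\mathbf{N}}_t]_{:,n}$ with $[\overline{\mathbf{N}}_t]_{:,n}\sim\mathcal{CN}(\mathbf{0}_\tau, N_0\mathbf{I}_\tau)$. By standard Gaussian conditioning, the posterior $p([\mathbf{Z}_t]_{:,n}\mid[\overline{\mathbf{Y}}_t]_{:,n};\bm\gamma_t)$ is Gaussian with mean $\bm\mu_{tn}$ and covariance $\bm\Sigma_t$ given precisely by \eqref{eqn_Estep1}--\eqref{eqn_Estep2} with $\bm\Gamma_t^j$ replaced by the true $\bm\Gamma_t$. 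Stacking the $N$ columns, the MMSE estimate is $\hat{\mathbf{Z}}_t = [\bm\mu_{t1},\ldots,\bm\mu_{tN}] = N_0^{-1}\bm\Sigma_t \mathbf{P}_t^H \overline{\mathbf{Y}}_t$.

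Next I would apply the Woodbury matrix identity to $\bm\Sigma_t = \bm\Gamma_t - \bm\Gamma_t\mathbf{P}_t^H(N_0\mathbf{I}_\tau + \mathbf{P}_t\bm\Gamma_t\mathbf{P}_t^H)^{-1}\mathbf{P}_t\bm\Gamma_t$, obtaining $\bm\Sigma_t = (\bm\Gamma_t^{-1} + N_0^{-1}\mathbf{P}_t^H\mathbf{P}_t)^{-1} = N_0(\mathbf{P}_t^H\mathbf{P}_t + N_0\bm\Gamma_t^{-1})^{-1}$. Substituting this into $\hat{\mathbf{Z}}_t = N_0^{-1}\bm\Sigma_t\mathbf{P}_t^H\overline{\mathbf{Y}}_t$ gives $\hat{\mathbf{Z}}_t = (\mathbf{P}_t^H\mathbf{P}_t + N_0\bm\Gamma_t^{-1})^{-1}\mathbf{P}_t^H\overline{\mathbf{Y}}_t$, which is \eqref{eqn_plug_in}; equivalently one can use the ``push-through'' identity $\bm\Gamma_t\mathbf{P}_t^H(N_0\mathbf{I}_\tau+\mathbf{P}_t\bm\Gamma_t\mathbf{P}_t^H)^{-1} = (\mathbf{P}_t^H\mathbf{P}_t + N_0\bm\Gamma_t^{-1})^{-1}\mathbf{P}_t^H$ directly. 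Since the posterior mean is the MMSE estimate, the error $[\hat{\mathbf{Z}}_t]_{:,n} - [\mathbf{Z}_t]_{:,n}$ is zero-mean with covariance equal to the posterior covariance $\bm\Sigma_t = N_0(\mathbf{P}_t^H\mathbf{P}_t + N_0\bm\Gamma_t^{-1})^{-1}$, which is exactly the right-hand side of \eqref{eqn_crlb}. Taking traces and summing over $n\in[N]$ and over the $T$ RBs then shows that \eqref{eqn_mse_zt} and \eqref{eqn_norm_crb} also hold with equality.

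The calculations here are routine matrix algebra; the only step requiring a moment of care is the justification that attaining the posterior covariance is the same as attaining the CRB of Theorem~\ref{thm_crlb}. I would close this by observing that, for the conditionally Gaussian model and with $\bm\gamma_t$ known, the FIM sub-block computed in Theorem~\ref{thm_crlb}, $\mathbf{J}_t = \mathbf{I}_N\otimes N_0^{-1}(\mathbf{P}_t^H\mathbf{P}_t + N_0\bm\Gamma_t^{-1})$, is precisely the inverse posterior covariance (per antenna, tensored over $N$), so $\mathbf{J}_t^{-1}$ equals the MMSE error covariance; hence the inequality \eqref{eqn_jtinv} holds with equality and no slack remains.
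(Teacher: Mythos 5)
Your proposal is correct and follows essentially the same route as the paper: identify the posterior of $[\mathbf{Z}_t]_{:,n}$ given $[\overline{\mathbf{Y}}_t]_{:,n}$ as Gaussian, equate the MMSE estimate with the posterior mean in \eqref{eqn_Estep1}--\eqref{eqn_Estep2} evaluated at the true $\bm\Gamma_t$, and verify that the resulting error covariance matches $\mathbf{J}_t^{-1}$. The paper compresses the Woodbury manipulation and the substitution into \eqref{eqn_mse_zt} into ``it is easy to show''; you have simply written out those steps explicitly, and your closing observation that $\mathbf{J}_t$ is exactly the inverse posterior covariance is the correct justification for equality in \eqref{eqn_jtinv}.
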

\begin{proof} 
The MSBL algorithm iteratively calculates the MAP estimate.
Since the posterior $p([\mathbf{Z}_t]_{:,n}|[\overline{\mathbf{Y}}_t]_{:,n}; \bm \gamma_t)$ is Gaussian distributed, the MAP estimate is the same as the mean of the distribution, which coincides with the MMSE estimate in \eqref{eqn_plug_in}. 
Upon substituting the above estimate into the MSE expression in \eqref{eqn_mse_zt}, it is easy to show that the CRB is achieved.
\end{proof}
\noindent \emph{Remark:} The above estimator requires knowledge of $ {\bm{\Gamma}}_t$, which in turn needs the user activity coefficients, and is thus a \emph{genie-aided} estimator. 
In practice, one could use the hyperparameter estimates output by Algorithm~\ref{algo_uad} in place of ${\bm{\Gamma}}_t$ to obtain a ``plug-in'' MMSE estimator. 
However, such an estimator need not achieve the CRB. 
Nonetheless, as empirically shown in Sec.~\ref{sec_results}, the channel estimates obtained using \eqref{eqn_Estep2} does achieve the CRB. (See Figs.~\ref{fig_uad_nmse_vs_tau} and \ref{fig_uad_nmse_vs_snr}.)

\section{SINR Analysis} \label{sec_sinr}
In this section, the SINR of each user in all the RBs where it has transmitted data is derived, accounting for pilot contamination, estimated user activities, and estimated channels.   
Let $\rho_{tm}^k $ denote the SINR of the $m$th user in the $t$th RB in the $k$th decoding iteration. 
Similar to \eqref{eqn_rec_data}, the received data signal in the $t$th RB and $k$th iteration is
\begin{align}
{\mathbf{y}_t^{k}} &= \textstyle{\sum\nolimits_{i \in \mathcal{S}_k}}  a_i g_{ti} \mathbf{h}_{ti} x_{i} + {\mathbf{n}_t}.
\end{align}
Let $M_t^k = | \hat{\mathcal{A}} \cap \mathcal{G}_{t} \cap  \mathcal{S}_k |$ be the number of users who are detected to be active and have transmitted in the $t$th RB, but have not been decoded in the first $k-1$ iterations. 
$ \hat{\mathcal{A}} $ is obtained as an output of Algorithm \ref{algo_uad}.
A combining matrix $\mathbf{A}_t^{k} \in \mathbb{C}^{N \times M_t^k}$ is used at the receiver in the $t$th RB and $k$th decoding iteration. 
For each $ m \in [M_t^k]$, the vector $\mathbf{a}_{tm}^k = [\mathbf{A}_t^{k}]_{:,m}$ combines the received data signal as
\begin{align}
\tilde{y}_{tm}^k &= [ \mathbf{A}_t^{kH} {\mathbf{y}_{t}^{k}} ]_m =  \mathbf{a}_{tm}^{kH} {\mathbf{y}_{t}^{k}}.
\end{align}

\begin{figure*}[t]
\begin{align}
\tilde{y}_{tm}^{k} &=  {\mathbf{a}}_{tm}^{kH} \hat{\mathbf{h}}_{tm}^{k}  a_m  g_{tm} x_{m}  - {\mathbf{a}}_{tm}^{kH}  \tilde{\mathbf{h}}_{tm}^{k} a_m g_{tm} x_{m} +   \textstyle{\sum\nolimits_{i \in \mathcal{S}_k^m \cap \mathcal{A}} } {\mathbf{a}}_{tm}^{kH} \mathbf{h}_{ti} a_i g_{ti} x_{i} +   \textstyle{\sum\nolimits_{i \in \mathcal{S}_k^m \cap \mathcal{M}}} {\mathbf{a}}_{tm}^{kH} \mathbf{h}_{ti} a_i g_{ti} x_{i} + {\mathbf{a}}_{tm}^{kH}  \mathbf{n}_{t}.  \label{eqn_combining} 
\end{align}
\end{figure*}

This post-combining data signal is used to decode the $m$th user and is composed of five terms as seen in \eqref{eqn_combining}. 
The term $T_1 \triangleq {\mathbf{a}}_{tm}^{kH} \hat{\mathbf{h}}_{tm}^{k} a_m g_{tm} x_{m}$ is the desired signal of the $m$th user; 
the term $T_2 \triangleq {\mathbf{a}}_{tm}^{kH} \tilde{\mathbf{h}}_{tm}^{k} a_m g_{tm} x_{m}$ is due to the estimation error $\tilde{\mathbf{h}}_{tm}^{k}$ of the $m$th user's channel; 
the term $T_3 \triangleq \sum\nolimits_{i \in \mathcal{S}_k^m \cap \mathcal{A}} {\mathbf{a}}_{tm}^{kH} \mathbf{h}_{ti} a_i g_{ti} x_{i}$ models the inter-user interference from other true positive users (who have transmitted in the $t$th RB and have not yet been decoded); 
the term $T_4 \triangleq \sum\nolimits_{i \in \mathcal{S}_k^m \cap \mathcal{M}} {\mathbf{a}}_{tm}^{kH} \mathbf{h}_{ti} a_i g_{ti} x_{i}$ is the interference from false negative users (who have transmitted in the $t$th RB, but cannot be decoded since they are declared to be inactive); 
and $T_5 \triangleq {\mathbf{a}}_{tm}^{kH} \mathbf{n}_{t}$ is the additive noise. 

To compute the SINR, the power of the post-combining data signal is calculated conditioned on the channel estimates~\cite{ref_bjornson_mimo_2017}. 
This is equivalent to computing the power of the post-combining data signal conditioned on the post-combining pilot signal as $\mathbb{E}_{\mathbf{z}} [|\tilde{y}_{tm}^{k}|^2] = \mathbb{E}_{\mathbf{z}} [|T_1 + T_2 + T_3 + T_4 + T_5|^2] $.  
Here, $\mathbf{z}$ contains the post-combining pilot signals of all $M_t^k$ users yet to be decoded.
Since noise is uncorrelated with data, $T_5$ is uncorrelated with the other terms. 
As MMSE channel estimates are uncorrelated with their estimation errors \cite{ref_bjornson_mimo_2017}, $T_1$ is uncorrelated with $T_2$.
Since the data signals of different users are independent, $T_3$ and $T_4$ are independent of each other and the other terms as well.
Thus, all the five terms are uncorrelated and the power in the received signal is simply the sum of the powers of the individual components
\begin{align} \label{eqn_uncorr}
\mathbb{E}_{\mathbf{z}} [|\tilde{y}_{tm}^{k}|^2] &=  \textstyle{\sum\nolimits_{i=1}^5} \mathbb{E}_{\mathbf{z}} [|T_i|^2].
\end{align}
We now compute the SINR in the following theorem.

\begin{thm} \label{thm_sinr}
The signal to interference plus noise ratio (SINR) achieved by the $m$th user in the $t$th RB in the $k$th decoding iteration can be written as
\begin{align}
\rho_{tm}^k &= \dfrac{ {\tt{Gain}}_{tm}^k }{N_0 + {\tt{Est}}_{tm}^k + {\tt{MUI}}_{tm}^k + {\tt{FNU}}_{tm}^k},  \ \forall m \in \mathcal{S}_k, \label{eqn_sinr_all} 
\end{align}
where $ {\tt{Gain}}_{tm}^k $ represents the useful signal power of the $m$th user,  ${\tt{Est}}_{tm}^k$ represents the interference power caused due to estimation errors of all true positive users, $ {\tt{MUI}}_{tm}^k $ represents the multi-user interference power of other true positive users, and ${\tt{FNU}}_{tm}^k$ represents the interference power caused due to the false negative users. 
These can be expressed as
\begin{subequations}
\begin{align}
{\tt{Gain}}_{tm}^k &= P \hat{a}_m a_m g_{tm} \dfrac{ | {\mathbf{a}}_{tm}^{kH} \hat{\mathbf{h}}_{tm}^{k} |^2 }{\| {\mathbf{a}}_{tm}^{k} \|^2} , \\
{\tt{Est}}_{tm}^k &= P \textstyle{\sum\nolimits_{i \in \mathcal{S}_k}} \hat{a}_i a_i g_{ti} \delta_{ti}^{k}, \\
{\tt{MUI}}_{tm}^k  &= P \textstyle{\sum\nolimits_{i \in \mathcal{S}_k^m}} \hat{a}_i a_i g_{ti} \dfrac{ | {\mathbf{a}}_{tm}^{kH} \hat{\mathbf{h}}_{ti}^{k} |^2   }{  \|{\mathbf{a}}_{tm}^{k} \|^2},\\
{\tt{FNU}}_{tm}^k &= P \textstyle{\sum\nolimits_{i \in \mathcal{S}_k^m}} (1- \hat{a}_i) a_i g_{ti} \beta_i \sigma_{\tt{h}}^2.
\end{align}
\end{subequations}
\end{thm}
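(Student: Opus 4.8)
The plan is to prove the identity in \eqref{eqn_sinr_all} by computing the conditional power $\mathbb{E}_{\mathbf{z}}[|\tilde{y}_{tm}^{k}|^2]$ of the post-combining data signal term by term, using the five-way split $\tilde{y}_{tm}^{k} = T_1 + T_2 + T_3 + T_4 + T_5$ of \eqref{eqn_combining} and the orthogonality already recorded in \eqref{eqn_uncorr}, namely $\mathbb{E}_{\mathbf{z}}[|\tilde{y}_{tm}^{k}|^2] = \sum_{i=1}^{5}\mathbb{E}_{\mathbf{z}}[|T_i|^2]$, and then reading off $\rho_{tm}^{k} = \mathbb{E}_{\mathbf{z}}[|T_1|^2] / \sum_{i=2}^{5}\mathbb{E}_{\mathbf{z}}[|T_i|^2]$ as the effective (``use-and-then-forget'') SINR in the sense of \cite{ref_bjornson_mimo_2017}. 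The single fact that drives every term is that, conditioned on $\mathbf{z}$, the combining vector $\mathbf{a}_{tm}^{k}$ and all MMSE channel estimates $\{\hat{\mathbf{h}}_{ti}^{k}\}$ are deterministic, while each estimation error $\tilde{\mathbf{h}}_{ti}^{k}$ is zero-mean, distributed as $\mathcal{CN}(\mathbf{0}_N,\delta_{ti}^{k}\mathbf{I}_N)$, and uncorrelated with $\mathbf{z}$ by Theorem~\ref{thm_ch_est}; moreover the data symbols $\{x_i\}$ are zero-mean of power $P$, mutually independent, and independent of the channels and the noise.

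Granting these facts, each contribution is a short second-moment calculation. For the noise, $\mathbb{E}_{\mathbf{z}}[|T_5|^2] = N_0\|\mathbf{a}_{tm}^{k}\|^2$. For the desired term, using $a_m^2 = a_m$, $g_{tm}^2 = g_{tm}$, $\mathbb{E}[|x_m|^2] = P$ and that $\hat{\mathbf{h}}_{tm}^{k}$ is $\mathbf{z}$-measurable and vanishes unless $\hat a_m = 1$, one gets $\mathbb{E}_{\mathbf{z}}[|T_1|^2] = P\hat a_m a_m g_{tm}|\mathbf{a}_{tm}^{kH}\hat{\mathbf{h}}_{tm}^{k}|^2 = \|\mathbf{a}_{tm}^{k}\|^2\,{\tt{Gain}}_{tm}^{k}$. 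For the self estimation-error term, $\mathbb{E}_{\mathbf{z}}[|T_2|^2] = P a_m g_{tm}\,\mathbf{a}_{tm}^{kH}\mathbb{E}_{\mathbf{z}}[\tilde{\mathbf{h}}_{tm}^{k}\tilde{\mathbf{h}}_{tm}^{kH}]\mathbf{a}_{tm}^{k} = P a_m g_{tm}\delta_{tm}^{k}\|\mathbf{a}_{tm}^{k}\|^2$, the $i = m$ piece of ${\tt{Est}}_{tm}^{k}$. For $T_3$, independence of the $\{x_i\}$ kills all cross terms, leaving $\mathbb{E}_{\mathbf{z}}[|T_3|^2] = P\sum_{i\in\mathcal{S}_k^m\cap\mathcal{A}} a_i g_{ti}\,\mathbb{E}_{\mathbf{z}}[|\mathbf{a}_{tm}^{kH}\mathbf{h}_{ti}|^2]$; writing $\mathbf{h}_{ti} = \hat{\mathbf{h}}_{ti}^{k} - \tilde{\mathbf{h}}_{ti}^{k}$, the cross term vanishes since $\mathbb{E}_{\mathbf{z}}[\tilde{\mathbf{h}}_{ti}^{k}] = \mathbf{0}_N$ while $\hat{\mathbf{h}}_{ti}^{k}$ and $\mathbf{a}_{tm}^{k}$ are fixed given $\mathbf{z}$, so $\mathbb{E}_{\mathbf{z}}[|\mathbf{a}_{tm}^{kH}\mathbf{h}_{ti}|^2] = |\mathbf{a}_{tm}^{kH}\hat{\mathbf{h}}_{ti}^{k}|^2 + \delta_{ti}^{k}\|\mathbf{a}_{tm}^{k}\|^2$; the first piece supplies $\|\mathbf{a}_{tm}^{k}\|^2\,{\tt{MUI}}_{tm}^{k}$ and the second, combined with the $T_2$ piece and $\hat a_i a_i = 1$ on $\mathcal{A}$, assembles the full ${\tt{Est}}_{tm}^{k} = P\sum_{i\in\mathcal{S}_k}\hat a_i a_i g_{ti}\delta_{ti}^{k}$ (false-negative and inactive users contributing zero to that sum). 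Finally, for $T_4$, the same independence argument gives $\mathbb{E}_{\mathbf{z}}[|T_4|^2] = P\sum_{i\in\mathcal{S}_k^m\cap\mathcal{M}} a_i g_{ti}\,\mathbb{E}_{\mathbf{z}}[|\mathbf{a}_{tm}^{kH}\mathbf{h}_{ti}|^2]$; since false-negative users have $\hat a_i = 0$ so that no estimate is subtracted, $\mathbf{h}_{ti}$ enters with its prior second-order statistics and $\mathbb{E}_{\mathbf{z}}[|\mathbf{a}_{tm}^{kH}\mathbf{h}_{ti}|^2] = \beta_i\sigma_{\tt{h}}^2\|\mathbf{a}_{tm}^{k}\|^2$, i.e. $\|\mathbf{a}_{tm}^{k}\|^2\,{\tt{FNU}}_{tm}^{k}$. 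Summing the five contributions, the common factor $\|\mathbf{a}_{tm}^{k}\|^2$ cancels between numerator and denominator and \eqref{eqn_sinr_all} follows.

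The per-term moment evaluations are routine; the step that needs genuine care is the channel split $\mathbf{h}_{ti} = \hat{\mathbf{h}}_{ti}^{k} - \tilde{\mathbf{h}}_{ti}^{k}$ in $T_3$ and the vanishing of the associated cross term, which relies on the conditional zero-mean/uncorrelatedness property of Theorem~\ref{thm_ch_est} together with the fact that $\mathbf{a}_{tm}^{k}$ depends on $\mathbf{z}$ only. A second delicate point is the treatment of $T_4$: because the false-negative channels are never estimated, they are accounted for through their prior power $\beta_i\sigma_{\tt{h}}^2$ (the correlation with $\mathbf{z}$ induced by pilot contamination is not exploited, consistent with the receiver model), and one should check that this is exactly the regime $\hat a_i = 0$ in which $\hat{\mathbf{h}}_{ti}^{k} = \mathbf{0}_N$ and $\delta_{ti}^{k}$ in Theorem~\ref{thm_ch_est} collapses to $\beta_i\sigma_{\tt{h}}^2$, so that the ${\tt{Est}}_{tm}^{k}$ sum over $\mathcal{S}_k$ and the separate ${\tt{FNU}}_{tm}^{k}$ sum are mutually consistent and non-overlapping. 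Everything else is bookkeeping over the index sets $\mathcal{S}_k$, $\mathcal{S}_k^m$, $\mathcal{A}$, $\mathcal{M}$.
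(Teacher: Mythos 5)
Your proposal is correct and follows essentially the same route as the paper's proof in Appendix~\ref{appendix_sinr}: term-by-term conditional second moments of $T_1,\ldots,T_5$, with the $T_3$ decomposition via the conditional correlation $\mathbb{E}_{\mathbf{z}}[\mathbf{h}_{ti}\mathbf{h}_{ti}^H]=\delta_{ti}^k\mathbf{I}_N+\hat{\mathbf{h}}_{ti}^{k}\hat{\mathbf{h}}_{ti}^{kH}$ from Theorem~\ref{thm_ch_est}, the unconditional prior statistics for the false-negative users in $T_4$, and cancellation of the common $\|\mathbf{a}_{tm}^{k}\|^2$ factor. Your consistency check that $\hat{a}_i=0$ forces $\hat{\mathbf{h}}_{ti}^{k}=\mathbf{0}_N$ and $\delta_{ti}^{k}=\beta_i\sigma_{\tt{h}}^2$ is a nice addition but matches what the paper implicitly uses.
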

\begin{proof}
See Appendix \ref{appendix_sinr}.
\end{proof}
\noindent \emph{Remark:} The interference components in the SINR expression are contributed only by truly active users, i.e., the true positive and false negative users.
False positive users do not contribute towards the received data signal.
Even though they do not cause interference, false positive users still affect data decoding of other (true positive) users via their influence on the channel estimates, which also feature in the SINR expression.
Further,  the SINR for such false positive users is zero.\footnote{The BS computes noise-based channel estimates for false positive users. Even if the SINR for such users happens to exceed $\gamma_{\text{th}}$, their packets will fail an error check, and thus their SINR can be set to zero.}
In contrast, false negative users contribute to the received pilot and data signals, affecting both the channel estimates and data decoding of true positive users. 
Since the BS does not detect or decode such users, their SINR is zero as well, and thus the system performance degrades due to such false negative users.
True negative users do not contribute to the received pilot or data signal, and thus do not affect the decoding of other users.
Thus, $\rho_{tm}^k = 0, \ \forall \ m \ \in \mathcal{F} \cup \mathcal{M} \cup \mathcal{I}$.

The SINR expression derived in Theorem \ref{thm_sinr} is applicable to any chosen combining scheme.
For example, with regularized zero forcing combining \cite{ref_bjornson_mimo_2017}, $\mathbf{A}_t^{k} $ is computed as
\begin{align} 
{\mathbf{A}_t^k} =  {\hat{\mathbf{H}}^{k}_t} ({\hat{\mathbf{H}}^{kH}_t} {\hat{\mathbf{H}}^{k}_t} + \lambda \mathbf{I}_{M_t^k}  )^{-1},
\end{align}
where $\lambda$ is the regularization parameter, and ${\hat{\mathbf{H}}_t^k} $ is an $N \times M_t^k$ matrix containing the channel estimates of the $M_t^k$ users as its columns.
The corresponding SINR is obtained by substituting the columns of the above combining matrix into \eqref{eqn_sinr_all}.
The system throughput can now be calculated from \eqref{eqn_sinr_all} via the decoding model described in Sec.~\ref{sec_sic_dec}. 
We note that, in practice, the BS does not compute the SINR; it simply tries to decode each user that is detected to be active, in the RBs it has chosen for transmission. 
However, the decoding succeeds only if the SINR exceeds the chosen threshold. 
Thus, we use the SINR threshold based abstraction to determine which packets are successfully decoded and hence the throughput.

\section{Numerical Results} \label{sec_results}
In this section, the UAD and channel estimation performance of Algorithm~\ref{algo_uad} and the impact of UAD errors on the throughput of IRSA are studied via Monte Carlo simulations.
In each run, independent realizations of the user activities, user locations, the APM, and the fades experienced by the users are generated. 
The results in this section are for $T = 50$ RBs, $N_s = 10^3$ Monte Carlo runs, $j_{\max} = 100$ iterations, $\gamma_{\text{pr}}=10^{-4}$, path loss exponent $\alpha = 3.76$, and channel variance $\sigma_{\tt{h}}^2 = 1 $ \cite{ref_bjornson_mimo_2017}.
The pilot sequences are generated as $ \mathbf{p}_m \stackrel{\text{i.i.d.}}{\sim} \mathcal{CN}(\mathbf{0}_{\tau}, P^{\tt{p}} \mathbf{I}_{\tau} )$ as in \cite{ref_liu_tsp_2018}.
The users are spread uniformly at random locations within a cell of radius $r_{\max} = 1000$ m, and the path loss is calculated as $\beta_m = (r_m/r_0)^{-\alpha}$, where $r_m$ is the radial distance of the $m$th user from the BS and $r_0=100$~m is the reference distance.
The soliton distribution \cite{ref_narayanan_istc_2012} with $k_s = 27$ and $a_s = 0.02$ is used to generate the repetition factors.\footnote{The soliton distribution achieves near optimal throughputs~\cite{ref_khaleghi_pimrc_2017}. Here, we reuse the same distribution to generate $d_m$.}

The user activity coefficients are generated as $a_m \stackrel{\text{i.i.d.}}{\sim}\text{Ber}(p_a)$, where $p_a = 0.1$ is the per-user activity probability.
The system load $L$ is defined as the average number of active users per RB, $L = Mp_a/T $. 
The number of users contending for the $T$ RBs is computed in each simulation based on the load $L$ as $M = \lfloor L T/ p_a \rceil$. 
The SNR for the $m$th user is calculated as 
$\text{SNR}_m = P \sigma_{\tt{h}}^2\beta_m/N_0. $
The received SNR of a user at the edge of the cell at the BS is termed as the \emph{cell edge SNR}.
The power levels of all users is set to the same value, $P$, chosen such that the signal from a user at a distance $r_{\max}$ from the BS is received at the cell edge SNR. 
This ensures that all users' signals are received at an SNR greater than or equal to the cell edge SNR, in singleton RBs.\footnote{If the cell edge SNR is such that the cell edge user's packet is decodable, then all users' packets are decodable with high probability in singleton RBs.}
The power levels of users is set to $P = P^{\tt{p}} = 20$ dB~\cite{ref_bjornson_mimo_2017} and $N_0$ is chosen such that the cell edge SNR is $10$~dB, unless otherwise stated.\footnote{In cases where the cell edge SNR is varied, the noise variance $N_0$ is varied according to the required cell edge SNR.}

\subsection{Error Rates for UAD}
\begin{figure}[t]
	\centering
\includegraphics[width=0.46\textwidth]{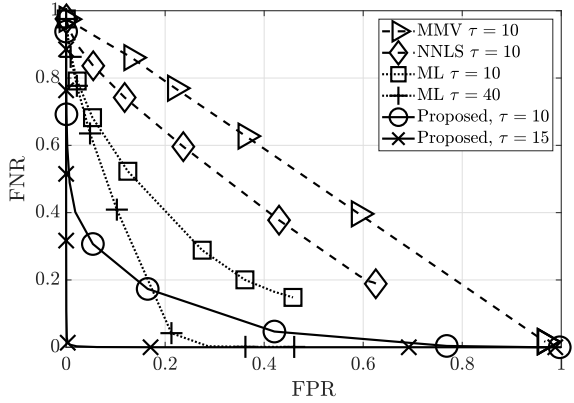}
	\caption{ROC of UAD: comparison with existing approaches.}
	\label{fig_uad_roc}
\end{figure}
\begin{figure}[t]
	\centering
\includegraphics[width=0.46\textwidth]{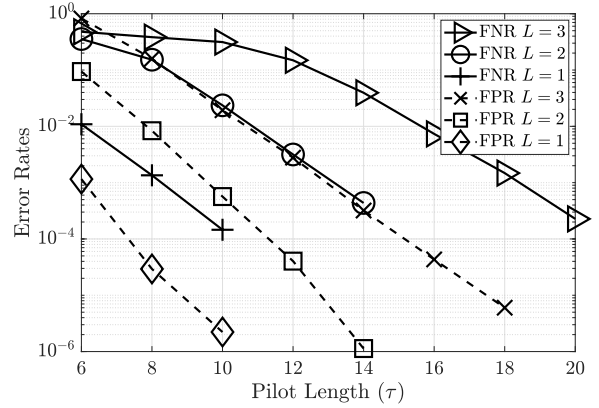}
	\caption{Impact of pilot length on error rates.}
	\label{fig_uad_err_vs_tau}
\end{figure}

In this subsection, the error rates for the recovery of user activity coefficients in IRSA is presented.
The metrics used to characterize the UAD performance are false positive rate, FPR $\triangleq \frac{|\mathcal{F}|}{|\mathcal{F}| + |\mathcal{I}|}$, and false negative rate, FNR $\triangleq \frac{|\mathcal{M}|}{|\mathcal{M}| + |\mathcal{A}|}$. 
FPR is the fraction of inactive users declared to be active whereas FNR is the fraction of active users declared to be inactive.

Fig. \ref{fig_uad_roc} shows the receiver operating characteristic (ROC) plot, and compares the performance of the proposed algorithm with existing approaches such as the maximum likelihood (ML), non-negative least squares (NNLS), and MMV, proposed in \cite{ref_haghighatshoar_isit_2018}.
Here, the threshold $\gamma_{\text{pr}}$ is varied to generate the curves, and the FNR is plotted versus the FPR for $N = 4$ and $L = 3$, which corresponds to $M=1500$ total users.
The existing algorithms are applied to \eqref{eqn_uad_Zt} to detect the $i$th user's activity $\hat{a}_i^{t}$ in the $t$th RB and the user is declared active if it is detected to be active in at least $\kappa$ RB, i.e.,  $\hat{a}_i = \mathbbm{1} \{ \sum_{t=1}^T \hat{a}_i^{t} \geq \kappa \}$.
We use $\kappa=1$ since it yields the lowest FNR.
Note that all of these algorithms estimate users' activities in each RB,  whereas our algorithm combines the estimated hyperparameters in a principled manner as seen in \eqref{eqn_Mstep2}, which is then used to infer the activities, and thus yields far fewer errors.
The proposed algorithm outperforms all three approaches which have themselves shown an improvement over other compressed sensing based algorithms such as approximate message passing \cite{ref_haghighatshoar_isit_2018}.
The ML approach with $\tau = 40$ intersects with the proposed algorithm with $\tau = 10$, and at the point of intersection, Algorithm~\ref{algo_uad} offers a 4-fold reduction in the pilot length compared to the ML approach while achieving the same UAD performance.
Further, the proposed algorithm with $\tau = 15$ significantly outperforms all the other approaches, and achieves a near-ideal performance.

Next, in Fig. \ref{fig_uad_err_vs_tau},  we plot the error rates (i.e., the FNR and FPR) of Algorithm \ref{algo_uad} versus the pilot length for varied $L$ with $N = 16$.
As the load is increased from $L = 1$ to $L=2, 3$, the total number of users over the $T$ RBs increases from $M=500$ to $M=1000, 1500$, and a longer pilot length is needed for accurate UAD.
Thus, there is a significant improvement of the error rates with the pilot length $\tau$.
This is important, since short packets are used in MMTC, and using non-orthogonal pilots with as few as $20$ symbols yields very low error rates with as many as $1500$ users. 
As noted earlier, with classical compressed sensing approaches for UAD, one would require $\Omega(346)$ pilot symbols for accurate UAD in the same settings.

\begin{figure}[t]
	\centering
\includegraphics[width=0.46\textwidth]{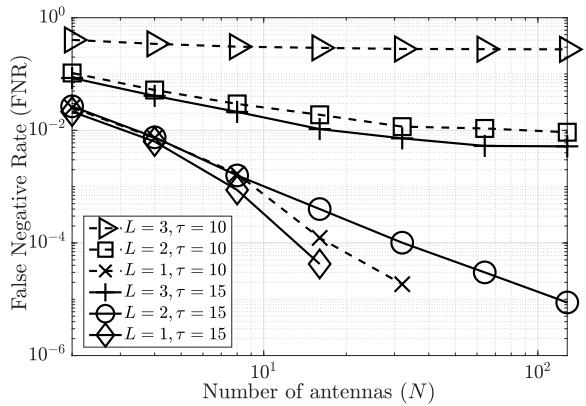}
	\caption{FNR for different pilot lengths and loads.}
	\label{fig_uad_fnr_vs_N}
\end{figure}
\begin{figure}[t]
	\centering
\includegraphics[width=0.46\textwidth]{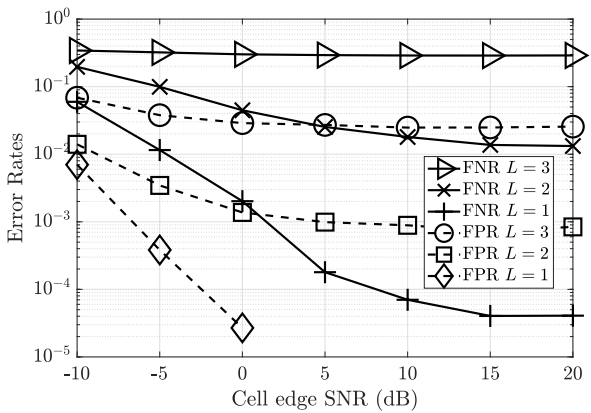}
	\caption{Effect of cell edge SNR on error rates.}
	\label{fig_uad_err_vs_snr}
\end{figure}

Next, we illustrate the variation of the FNR with the number of antennas for varied $L$ and $\tau$, in Fig. \ref{fig_uad_fnr_vs_N}.
The FNR is observed to increase with an increase in $L$.
The FNR also reduces with an increase in $N$ or $\tau$ since the total number of measurements available in the received pilot signal increases, which improves the recovery of user activities in each RB.
For $\tau = 10$, the FNR saturates with $N$ for $L=2/3$, whereas for $\tau = 15$, the FNR saturates at high $L$ and reduces for low to medium $L$.
This is because the performance of MSBL depends more critically on the number of rows $\tau$ in the received signal than the number of columns $N$ \cite{ref_wipf_tsp_2007}.
Thus, at a given load, if $\tau$ is too low,  the FNR improves only slowly with $N$, but if $\tau$ is large enough, the FNR improves dramatically with $N$. Hence, as the load increases, it is important to increase $\tau$ as well. 
In our approach, we solve a reduced problem in each RB as seen in \eqref{eqn_uad_Zt}, after accounting for the APM.
Due to this, in the $t$th RB,  $\tau = \Omega ( M_t p_a \log \frac{M_t}{M_t p_a} ) = \Omega ( - M_t p_a \log p_a)$ would achieve a vanishing error rate.
This guarantee is applicable when $\tau > M_t p_a$, which is the average number of non-zero entries to be recovered in each column of $\mathbf{Z}_t$ in \eqref{eqn_uad_Zt}.
For $k_s = 27$, the average repetition factor is $\bar{d} = 4$, and thus on an average, $M_t = \frac{L \bar{d}}{p_a} = 120, 80, 40$ for $L = 3,2,1$, and of the order $\tau = 28,19,10$ pilot symbols are required, respectively, to achieve a low error rate for Algorithm \ref{algo_uad} as $N$ gets large.

In Fig. \ref{fig_uad_err_vs_snr}, the error rates are plotted against the cell edge SNR for varied $L$ and $\tau=10$.
For low $L$, the error rates first linearly reduce with SNR and then saturate at high SNR.
The FPR for $L=1$ requires longer simulations to capture the point where it saturates with SNR.
For high $L$, both the error rates saturate very quickly with the SNR.
As the load $L$ is decreased, the error rates reduce since there are fewer users to be detected.
As seen earlier, for a fixed $L$, increasing the pilot length can decrease the rates and the error rates reduce at the point of saturation.
In the noise limited regime, i.e., SNR $< -5$ dB, the error rates are high since the Bayesian estimation process performs poorly at such low SNRs.

\subsection{Normalized Mean Squared Error}
\begin{figure}[t]
	\centering
\includegraphics[width=0.46\textwidth]{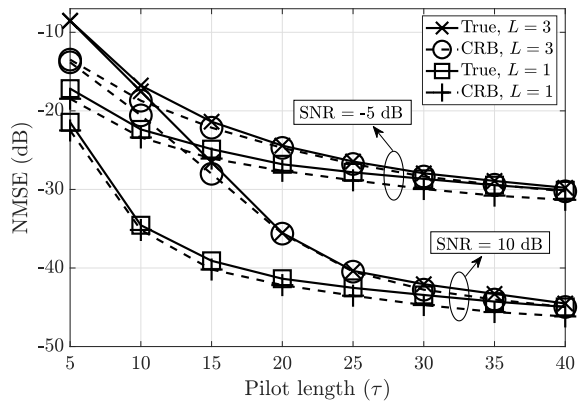}
	\caption{Impact of pilot length on NMSE: The curves labeled \texttt{True} show the NMSE with the estimates output by Algorithm~\ref{algo_uad}.}
	\label{fig_uad_nmse_vs_tau}
\end{figure}
\begin{figure}[t]
	\centering
\includegraphics[width=0.46\textwidth]{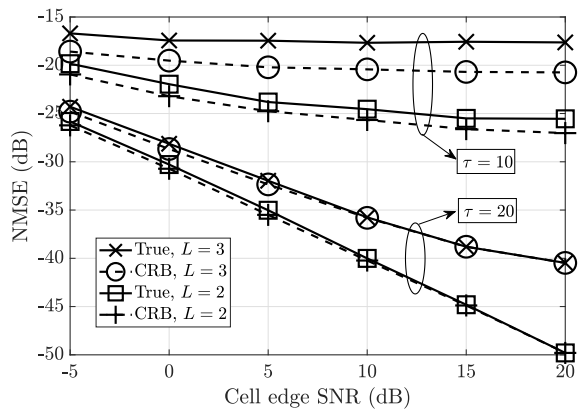}
	\caption{Effect of cell edge SNR on NMSE.}
	\label{fig_uad_nmse_vs_snr}
\end{figure}

Fig. \ref{fig_uad_nmse_vs_tau} shows the impact of the pilot length $\tau$ on the normalized mean squared error (NMSE) of the channels estimated using Algorithm~\ref{algo_uad}.
The NMSE is calculated as  $ \mathbb{E} [\| \mathbf{X} -  \hat{\mathbf{X}} \|_F^2]/\mathbb{E} [\| \mathbf{X} \|_F^2] $, where $\mathbf{X}$ is the channel matrix from \eqref{eqn_one_shot} and $\hat{\mathbf{X}}$ is the corresponding matrix of channel estimates obtained from the UAD algorithm.
It is observed that the NMSE converges to the same value at all $L$ as $\tau$ is increased to $40$, and the value the NMSE converges to decreases with SNR.
As $\tau$ increases,  UAD is perfect and the effect of pilot contamination is reduced, resulting in nearly the same NMSE at all loads.
Also, at low $\tau$, the NMSE is higher for $L=3$ compared to $L=1$, since we have to estimate channels for a larger number of users -- both UAD errors and pilot contamination contribute to a worsening of performance.
The normalized CRB from \eqref{eqn_norm_crb} is also plotted for the system under all the configurations.
It is seen that the gap between the true NMSE and the normalized CRB reduces as $\tau$ increases.
The NMSE is insensitive to the value of $N$, as both the numerator and the denominator of the NMSE scale equally with $N$. 
Hence, we do not study the impact of $N$ on the NMSE.

Fig. \ref{fig_uad_nmse_vs_snr} shows the impact of SNR on NMSE.
The NMSE saturates with an increase in SNR for both $L$ at $\tau=10$, since the UAD performance saturates and any increase in SNR does not improve the quality of the channel estimates.
For $\tau=20$, the NMSE linearly reduces with SNR up to $20$ dB.
At higher $\tau$, the NMSE is lower since there are more measurements available in the received pilot signal to obtain both better UAD performance and high quality channel estimates.
Further, the gap between the true NMSE and the normalized CRB reduces with an increase in SNR for $\tau=20$.
Thus, the CRB, which is achieved by the genie-aided estimator in \eqref{eqn_plug_in}, is also achieved by the estimates in Algorithm \ref{algo_uad} as $\tau$ and SNR are~increased.

\subsection{Throughput Accounting for UAD and Channel Estimation}
\begin{figure}[t]
	\centering
\includegraphics[width=0.46\textwidth]{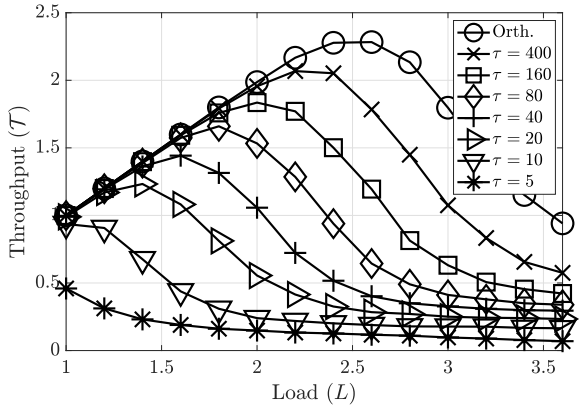}
	\caption{Effect of system load $L$.}
	\label{fig_thpt_vs_L_varying_tau}
\end{figure}
\begin{figure}[t]
	\centering
\includegraphics[width=0.46\textwidth]{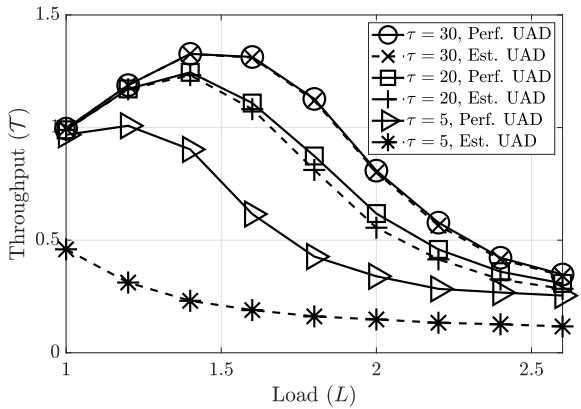}
	\caption{Impact of estimated UAD.}
\label{fig_puad_vs_iuad}
\end{figure}
The performance of IRSA can be characterized by its throughput, which is defined as the number of packets that were successfully decoded at the BS as a fraction of the total number of RBs.\footnote{We note that the throughput $\mathcal{T}$ of IRSA is directly related to the packet loss rate $\textsf{PLR}$ and the spectral efficiency $\mathcal{R}$ as $\mathcal{T} = L(1-\textsf{PLR})$ and $\mathcal{R} = \mathcal{T}\times \log_2(1+\gamma_{\text{th}})$, respectively \cite{ref_liva_toc_2011}. }
Note that, at a system load of $L$, the average throughput of the system is upper bounded by $L$ packets per RB, since there are, on average, $LT$ unique packets  transmitted per frame of duration $T$ RBs.
In this subsection, the SINR analysis presented in Sec.~\ref{sec_sinr} is used to evaluate the throughput of IRSA with UAD and estimated channels.
The number of successfully  decoded packets per RB for each simulation is calculated as described in section \ref{sec_sic_dec}, and the  throughput of the system is found by averaging over the simulations.

Fig. \ref{fig_thpt_vs_L_varying_tau} shows the system throughput, $\mathcal{T}$ (successfully decoded packets per RB), evaluated for different pilot lengths under UAD and estimated CSI, with threshold $\gamma_{\text{th}} = 16$ and regularization parameter $\lambda = 1$, as  a function of the load~$L$.
For $\tau \geq 20$, the throughputs exceed unity, which is the throughput of perfectly coordinated orthogonal access. 
In the moderate load regime ($L<2$), the system can serve more users, and thus the throughput increases linearly with load.
As the load is increased further, the system becomes interference limited as there are too many users sharing the same resources, thereby decreasing the SINR and the throughput.
Also, as the pilot length $\tau$ increases, UAD performance improves, better quality channel estimates are obtained, and the corresponding SINR increases.
The orthogonal pilots curve is obtained by allocating $\tau = M = \lfloor L T/ p_a \rceil$ for each $L$, and this achieves nearly the same performance as the case where perfect CSI is available at the BS.
At $L = 2$, there are $M = 1000$ users that need to be served.
For $\tau = 80$ and $400$, the achievable throughputs are $\mathcal{T} = 1.5$ and $2$, respectively.
At a load of $L = 1.5$, the throughput obtained with $\tau = 80$ is identical to the one offered by the orthogonal pilots, which would need a pilot length of $\tau = M = 750$.
This shows one can use significantly fewer number of pilot symbols and still achieve the same throughput as fully orthogonal pilots, at low to medium loads.

Figure \ref{fig_puad_vs_iuad} quantifies the effect of UAD on the performance of IRSA, by plotting the throughput against the system load under  \emph{perfect} and \emph{estimated} user activities. Here, $\gamma_{\text{th}} = 16$ and $\lambda = 1$ as in the previous figure.
In both cases, the throughput increases linearly with $L$ till it hits a maximum and then reduces.
With a pilot length $\tau = 5$, the gap between estimated and perfect UAD is at its maximum of $0.7$ (packets/RB) at $L = 1.2$.
As the pilot length is increased, the gap reduces to a maximum of $0.1$ (packets/RB) at $L = 2$ for $\tau = 20$ and a negligibly small difference for $\tau = 30$.
This shows that for lower pilot lengths, UAD performance has a significant effect on the throughput.
For higher pilot lengths, the UAD is nearly perfect, and, in this regime, channel estimation and data decoding limits the performance.
Thus, UAD is the easier problem in practical regimes of interest.

\begin{figure}[t]
	\centering
	\includegraphics[width=0.46\textwidth]{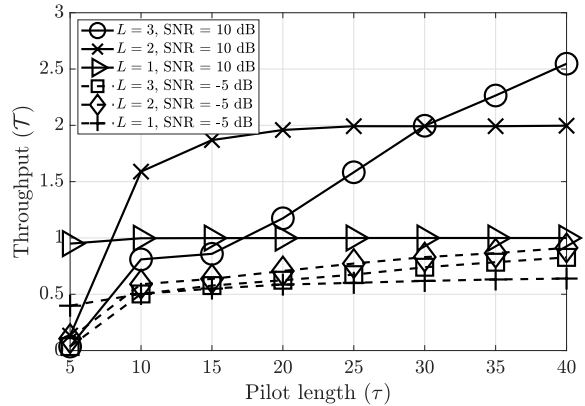}
	\caption{Impact of pilot length $\tau$ on throughput.}
\label{fig_thpt_vs_tau}
\end{figure}
\begin{figure}[t]
	\centering
\includegraphics[width=0.47\textwidth]{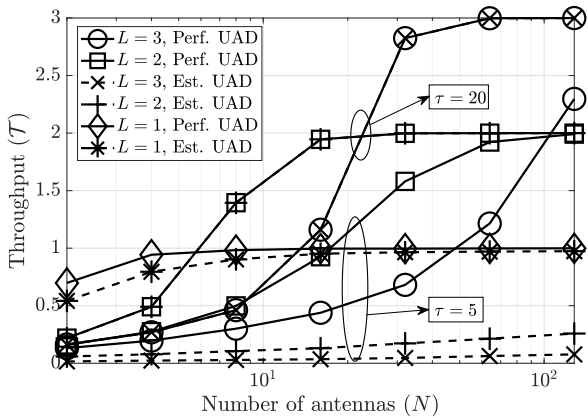}
	\caption{Effect of number of antennas $N$.}
	\label{fig_thpt_vs_N}
\end{figure}

In Fig. \ref{fig_thpt_vs_tau}, we investigate the effect of pilot length on the system throughput at different $L$ and SNRs.
The threshold is set to $\gamma_{\text{th}} = 10$ and regularization parameter to $\lambda = 10^{-2}$ for the rest of the results.
At a cell edge SNR of $-5$ dB, the system throughput is very low due to poor UAD as well as poor quality channel estimates.
The throughput saturates with an increase in $\tau$ for all loads.
Even though more measurements are available at high $\tau$, even if the UAD process is successful and the channel estimates are accurate, the low SNR results in data decoding failures, which limits the throughput.
As the cell edge SNR is increased to $10$ dB, the system performance dramatically improves.
At this SNR, optimal throughputs of $\mathcal{T} = L$ is achieved with $\tau = 10/25$ for $L = 1/2$, respectively, which correspond to $M = 500/1000$ total users and on an average $Mp_a = 50/100$ active users, respectively.
For $L=3$, the optimal throughput is obtained at $\tau=70$, which is not depicted here.
As seen previously, the UAD problem is dominant for very low $\tau$ for these loads, and for higher $\tau$, channel estimation dominates the performance.
To summarize, the pilot length has a significant impact on the performance of IRSA and is instrumental in yielding near-optimal throughputs.

In Fig. \ref{fig_thpt_vs_N}, the system throughput is plotted against the number of antennas at the BS for different $L$ and $\tau$, under both perfect and estimated UAD.
The gap between the perfect and estimated UAD throughputs for $L=2,3$ and $\tau=5$ increases with $N$, and the gap is the highest at $N=128$.
This is because the UAD performance saturates with $N$ for high $L$ at low $\tau$. 
Due to the combined effect of UAD errors, pilot contamination, and interference, low pilot lengths adversely impact both the UAD performance and system throughput.
For $\tau = 20$, increasing $N$ has a dramatic impact at high $L$, and the curves with perfect and estimated UAD overlap completely.
Nearly optimal throughputs of $\mathcal{T} = L$ can be achieved with $N = 16, 32$ antennas for $L = 2,3$.
At $\tau  \ge 20$,  increasing the number of antennas improves UAD, and increases both the array gain and the decoding capability of the BS, leading to more users getting decoded with RZF.
In particular, at $L = 3$, the rise in throughput as $N$ is increased from $8$ to $32$ shows the impact of the number of antennas in improving the throughput.

\begin{figure}[t]
	\centering
\includegraphics[width=0.46\textwidth]{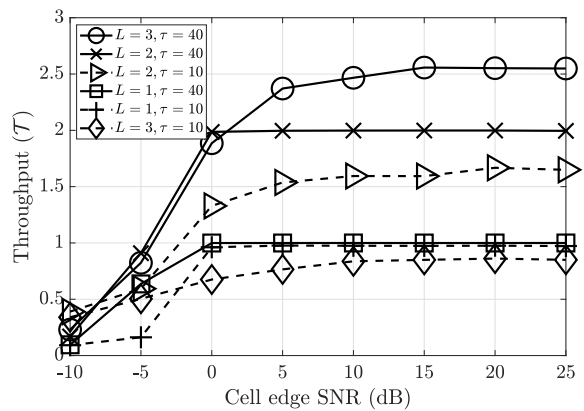}
	\caption{Impact of cell edge SNR.}
	\label{fig_thpt_vs_snr}
\end{figure}
In Fig. \ref{fig_thpt_vs_snr}, we illustrate the impact of cell edge SNR on the throughput.
In the noise-limited regime (SNR $ < 0 $ dB), an increase in SNR increases the throughput. 
Beyond an SNR of $0$ dB, increasing SNR only marginally increases the throughput for all $L$ and $\tau$ and the system becomes interference-limited for $\tau = 10$.
This is because both signal and interference powers get scaled equally, and the SINR remains the same.
At $\tau = 40$, for $L = 1$ and $2$,  optimal throughputs can be obtained at a cell edge SNR $ = 0$ dB. 
However, the throughput for $L = 3$ saturates beyond $10$ to $15$ dB SNR and does not yield the optimal throughput of $\mathcal{T} = 3$ due to high $L$ and low $\tau$.
In summary, the throughput can be improved by increasing the pilot length, number of antennas, and  SNR judiciously: unilaterally increasing one of the three can lead to the throughput saturating at a value lower than $\mathcal{T} = L$.

\begin{figure}[t]
	\centering
\includegraphics[width=0.5\textwidth]{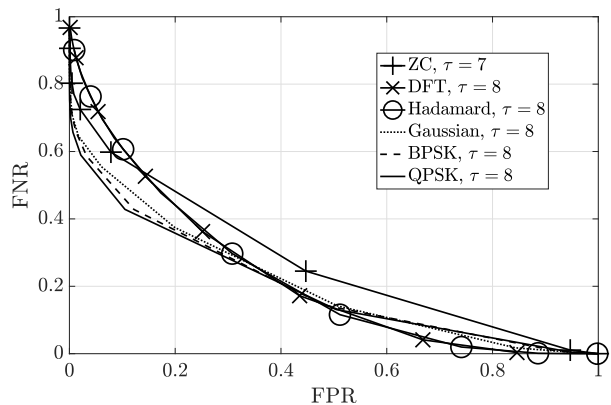}
	\caption{ROC comparison with different pilot sequences.}
	\label{fig_ROC_pilots}
\end{figure}
\begin{figure}[t]
	\centering
\includegraphics[width=0.5\textwidth]{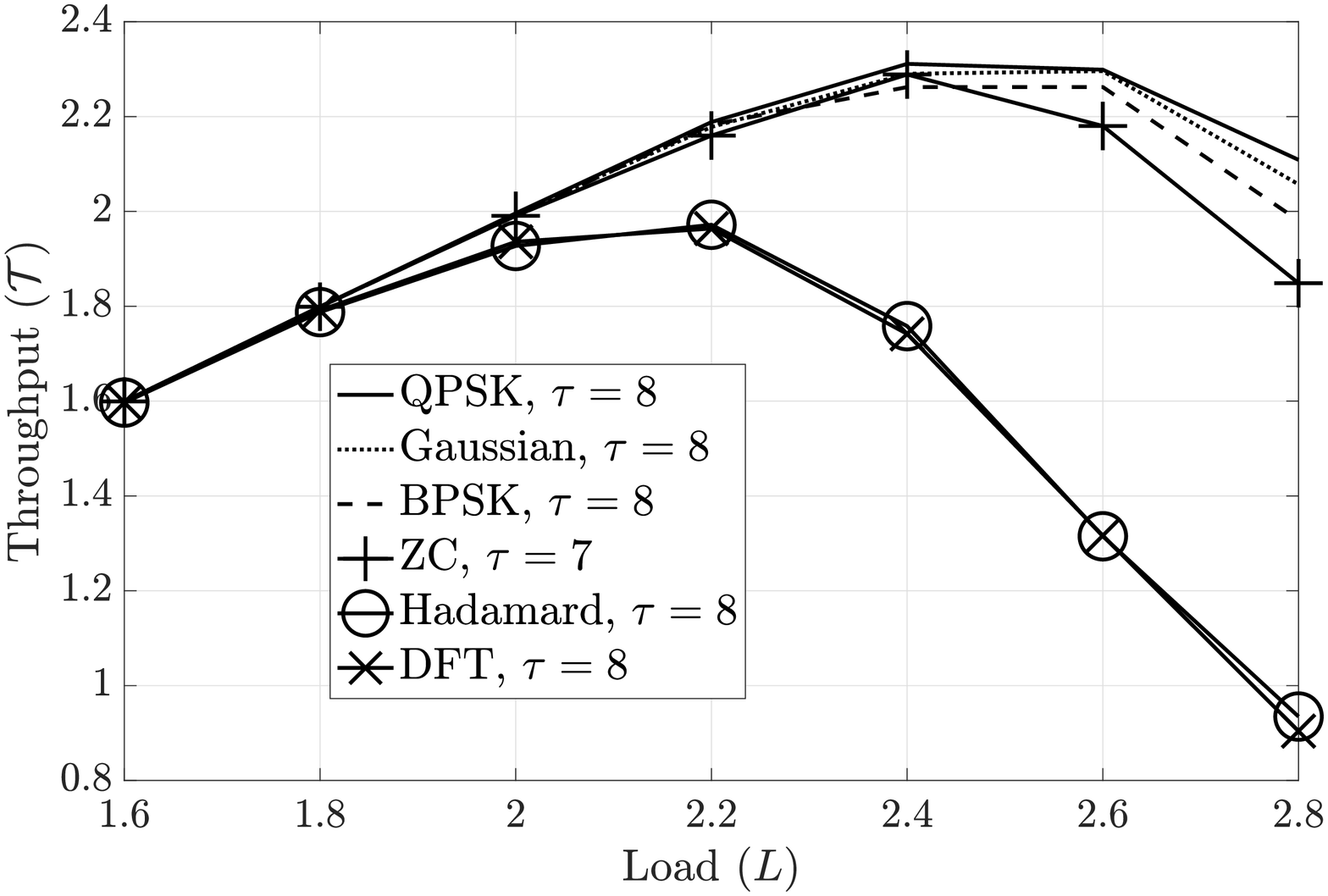}
	\caption{Performance comparison of different pilot sequences.}
	\label{fig_thpt_vs_L_pilots}
\end{figure}

In Fig. \ref{fig_ROC_pilots}, we plot the ROC curves for UAD in IRSA for different pilot sequences, with $N=4$, $L=3$,  and $T=50$.
The non-orthogonal pilots, labeled as BPSK and QPSK, contain random pilot symbols belonging to the respective PSK constellations, and Zadoff-Chu (ZC) sequences are generated according to \cite{ref_zhang_tvt_2012}.
ZC sequences require prime $\tau$;  we use $\tau=7$.
With mutually orthogonal pilot sequences, such as Hadamard and discrete Fourier transform (DFT), $\tau$ sequences of length $\tau$ can be generated.
Thus, we perform orthogonal pilot reuse (OPR), where each user randomly selects a pilot sequence from the available set of $\tau$ pilot sequences, similar to~\cite{ref_valentini_globecom_2021}.
We see that all the pilot sequences have similar UAD performance.
In particular, QPSK, BPSK, and Gaussian pilot sequences have nearly identical performance; DFT and Hadamard sequences have identical UAD performance.

In Fig. \ref{fig_thpt_vs_L_pilots}, we compare the throughput obtained when several orthogonal and non-orthogonal pilot sequence sets are used, with perfect UAD,  $N=16$, $\gamma=6$,  and $\lambda=1$.
Random QPSK, Gaussian, and BPSK pilots have identical performance, and ZC sequences result in a marginally lower throughput at high loads.
In the $\mathcal{T}=L$ regime, the performance of non-orthogonal pilots is better than OPR.
Too much pilot reuse, which is worse with OPR due to the smaller set of available pilots, deteriorates the performance. The use of non-orthogonal pilots provides better diversity, since there is a richer set of pilot sequences, leading to better performance~\cite{ref_senel_tcom_2018}.
Thus, non-orthogonal pilot sequences result in better throughput and nearly identical UAD performance compared to OPR.

\begin{figure}[t]
	\centering
\includegraphics[width=0.5\textwidth]{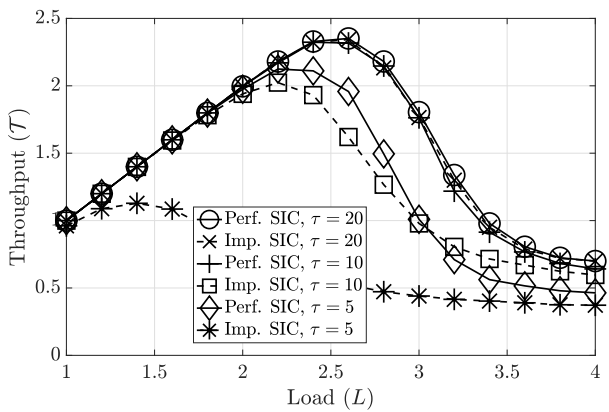}
	\caption{Effect of imperfect SIC.}
	\label{fig_impSIC_thpt_vs_L}
\end{figure}

We now investigate the effect of imperfect SIC due to channel estimation errors on the performance of IRSA.
Under imperfect SIC, the post-combined data signal from \eqref{eqn_combining} contains an extra term, $\sum\nolimits_{i \in \mathcal{S}_1 \setminus \mathcal{S}_k} {\mathbf{a}}_{tm}^{kH} \tilde{\mathbf{h}}_{ti}^{k_i} a_i g_{ti} x_{i}$, that represents the residual interference due to channel estimation, where $k_i$ denotes the iteration in which the $i$th user was decoded.
Thus, the SINR in \eqref{eqn_sinr_all} contains an extra term in the denominator, $ {\tt{ImpSIC}}_{tm}^k = P \sum\nolimits_{i \in \mathcal{S}_1 \setminus \mathcal{S}_k} \hat{a}_i a_i g_{ti} \delta_{ti}^{k_i} $, which represents the power of the residual interference due to imperfect SIC, where $\delta_{ti}^{k_i} $ is the power of the MMSE estimation error of the $i$th user in the $t$th RB who has been decoded in the $k_i$th decoding iteration.
Fig. \ref{fig_impSIC_thpt_vs_L} studies the effect of imperfect SIC on the performance of IRSA, with random BPSK pilots.
We also assume perfect UAD here, since we wish to address the effect of imperfect SIC.
The gap between the perfect SIC and imperfect SIC curves reduce as the pilot length is increased.
The gap is negligible at $\tau=20$, and is very high at $\tau=5$.
Thus, at higher pilot lengths, the effect of imperfect SIC due to channel estimation errors can be ignored.

\section{Conclusions} \label{sec_conclusions}
This paper studied the impact of UAD on the throughput of IRSA, which is a GFRA protocol that involves repetition of packets across different RBs.
A novel Bayesian algorithm was proposed to detect the set of active users in IRSA, which exploited the knowledge of the APM, and combined the hyperparameter updates across all RBs to yield an improved UAD performance.
Next, the channel estimates were derived accounting for UAD errors.
A Cram\'{e}r-Rao bound was then derived for the channels estimated under the hierarchical Bayesian model used to develop the proposed algorithm.
Then, the SINR of all users was derived accounting for UAD, channel estimation errors, and pilot contamination.
The effect of these errors on the throughput was studied via extensive simulations.
Many new insights into the design of the IRSA protocol were discussed, namely, the complexity of UAD compared to channel estimation, and the improvement of both UAD and throughput with respect to $\tau$, $N$, SNR, and $L$.
The results underscored the importance of accounting for UAD errors and channel estimation, in studying the throughput offered by the IRSA protocol in MMTC.
We assumed perfect RB- and frame- level synchronization across users and the BS; future work can consider relaxing this assumption.   
Exploiting the asynchronous nature of random access transmissions to detect active users and estimate their channels instead of orthogonal/non-orthogonal pilots is also an interesting direction for future work.

\begin{appendices}
\renewcommand{\thesectiondis}[2]{\Alph{section}:}

\section{Proof of Theorem \ref{thm_ch_est}} \label{appendix_ch_est}
Since the channel coefficients are Gaussian distributed, the MMSE estimator is $ \hat{\mathbf{h}}_{tm}^k \triangleq \mathop{{}\mathbb{E}_\mathbf{z}} \left[ {\mathbf{h}}_{tm} \right] $, where $\mathbf{z}$ contains the post-combining pilot signals for all users detected to be active. 
The channel estimation error $\tilde{\mathbf{h}}_{tm} ^{k} \triangleq \hat{\mathbf{h}}_{tm}^{k}  - {\mathbf{h}}_{tm} $ is uncorrelated with the received pilot signal and the estimate itself \cite{ref_bjornson_mimo_2017}. 
The conditional statistics of a Gaussian random vector $\mathbf{x}$ are 
\begin{subequations}
\begin{align}
\mathop{{}\mathbb{E}_\mathbf{z}} \left[{\mathbf{x}} \right]
&= \mathop{{}\mathbb{E}} \left[{\mathbf{x}} \right] + \mathbf{K}_{\mathbf{x} \mathbf{z}} \mathbf{K}_{\mathbf{z} \mathbf{z}}^{-1} \left( \mathbf{z} - \mathop{{}\mathbb{E}} \left[{\mathbf{z}} \right] \right), \label{eqn_gauss_cond1} \\
\mathbf{K}_{\mathbf{xx} | \mathbf{z}}
&= \mathbf{K}_{\mathbf{x} \mathbf{x}} - \mathbf{K}_{\mathbf{x} \mathbf{z}} \mathbf{K}_{\mathbf{z} \mathbf{z}}^{-1} \mathbf{K}_{\mathbf{z} \mathbf{x}}. \label{eqn_gauss_cond2}
\end{align}
\end{subequations}
Here, $\mathbf{K}_{\mathbf{x} \mathbf{x}},$ $\mathbf{K}_{\mathbf{xx} | \mathbf{z}},$ and $ \mathbf{K}_{\mathbf{x} \mathbf{z}}$ are the unconditional covariance of $\mathbf{x}$, the conditional covariance of $\mathbf{x}$ conditioned on $\mathbf{z}$, and the cross-covariance of $\mathbf{x}$ and $\mathbf{z}$, respectively.
From \eqref{eqn_gauss_cond1}, the MMSE channel estimate $ \hat{\mathbf{h}}_{tm}^k$ can be calculated as
\begin{align*}
& \hat{\mathbf{h}}_{tm}^{k} =  \mathop{{}\mathbb{E}} {[ \mathbf{h}_{tm}{\mathbf{y}_{tm}^{{\tt{p}}kH}} ]} \mathop{{}\mathbb{E}} [ \mathbf{y}_{tm}^{{\tt{p}}k} {\mathbf{y}_{tm}^{{\tt{p}}kH}} ]^{-1} \mathbf{y}_{tm}^{{\tt{p}}k} \nonumber \\
&=  \frac{ \hat{a}_m g_{tm} \beta_m \sigma_{\tt{h}}^2 \| \mathbf{p}_m \|^2 }{ (N_0 \| \mathbf{p}_m \|^2 + \sum\nolimits_{i \in \mathcal{S}_k} \hat{a}_i g_{ti} \beta_i \sigma_{\tt{h}}^2 | \mathbf{p}_i^H  \mathbf{p}_m|^2 )}  \mathbf{y}_{tm}^{{\tt{p}}k} \triangleq \eta_{tm}^{k} \mathbf{y}_{tm}^{{\tt{p}}k}.
\end{align*}
The above is computed based on the users detected to be active and is thus a function of estimated activity coefficients $\hat{a}_i$.
From \eqref{eqn_gauss_cond2}, the conditional covariance of the channel~${\mathbf{h}}_{tm}$ is calculated conditioned on $\mathbf{z}$, which contains the post-combining pilot signals for users detected to be active. 
Also, $\mathbf{K}_{{\mathbf{h}}_{tm} {\mathbf{h}}_{tm}}$
$= \beta_m \sigma_{\tt{h}}^2 \mathbf{I}_N$, $\mathbf{K}_{{\mathbf{h}}_{tm}\mathbf{z}}$ $= \mathbb{E} [{\mathbf{h}}_{tm} \mathbf{y}_{tm}^{{\tt{p}}kH}]$ $= \| \mathbf{p}_m \|^2 a_m g_{tm} \beta_m \sigma_{\tt{h}}^2 \mathbf{I}_N$. 
Thus, the conditional covariance is
\begin{align*}
& \mathbf{K}_{{\mathbf{h}}_{tm} {\mathbf{h}}_{tm} | \mathbf{z}}
= \mathbf{K}_{{\mathbf{h}}_{tm} {\mathbf{h}}_{tm}} - \mathbf{K}_{{\mathbf{h}}_{tm}\mathbf{z}} \mathbf{K}_{\mathbf{z} \mathbf{z}}^{-1} \mathbf{K}_{\mathbf{z} {\mathbf{h}}_{tm}} \nonumber \\
& \ = ( \beta_m \sigma_{\tt{h}}^2 - \eta_{tm}^k \| \mathbf{p}_m \|^2 a_m g_{tm} \beta_m \sigma_{\tt{h}}^2 ) \mathbf{I}_N \ \triangleq \ \delta_{tm}^k \mathbf{I}_N. 
\end{align*}
Here, $\delta_{tm}^{k} = \beta_m \sigma_{\tt{h}}^2  \left( \frac{  \sum\nolimits_{i \in \mathcal{S}_k^m} |\mathbf{p}_{i}^{H} \mathbf{p}_{m}|^2 \hat{a}_i a_i g_{ti} \beta_i \sigma_{\tt{h}}^2 + N_0 \|\mathbf{p}_{m}\|^2 }{ \sum\nolimits_{i \in \mathcal{S}_k} |\mathbf{p}_{i}^{H} \mathbf{p}_{m}|^2 \hat{a}_i a_i g_{ti} \beta_i \sigma_{\tt{h}}^2 + N_0 \|\mathbf{p}_{m}\|^2 } \right)$ represents the interference caused due to estimation errors of other true positive users.
It is a function of the pilots of the other true positive users only and not the pilots of false positive users. 
False positive users are omitted from the above because such users do not  contaminate the pilots of other users.
The conditional correlation follows from its definition as
\begin{align*}
\mathbb{E}_{\mathbf{z}}  [{\mathbf{h}}_{tm} {\mathbf{h}}_{tm}^{H}] &= \mathbf{K}_{{\mathbf{h}}_{tm} {\mathbf{h}}_{tm} | \mathbf{z}} +  \mathbb{E}_{\mathbf{z}}  [{\mathbf{h}}_{tm} ] \mathbb{E}_{\mathbf{z}}  [{\mathbf{h}}_{tm}]^H \nonumber \\
&= \delta_{tm}^k \mathbf{I}_N + \hat{\mathbf{h}}_{tm}^{k} \hat{\mathbf{h}}_{tm}^{kH}.
\end{align*} 
The unconditional and conditional means of the error are $\mathbb{E} [\tilde{\mathbf{h}}_{tm}^{k}] = \mathbb{E} [\hat{\mathbf{h}}_{tm}^{k} - {\mathbf{h}}_{tm}] = 0$ and $\mathbb{E}_{\mathbf{z}} [\tilde{\mathbf{h}}_{tm}^{k}] = \mathbb{E}_{\mathbf{z}} [\hat{\mathbf{h}}_{tm}^{k} - {\mathbf{h}}_{tm}] = \hat{\mathbf{h}}_{tm}^{k} - \hat{\mathbf{h}}_{tm}^{k} = 0.$
The conditional covariance of the error~is
\begin{align*}
\mathbf{K}_{\tilde{\mathbf{h}}_{tm}^{k} \tilde{\mathbf{h}}_{tm}^{k} | \mathbf{z}} &= \mathbb{E}_{\mathbf{z}} [\tilde{\mathbf{h}}_{tm}^{k} \tilde{\mathbf{h}}_{tm}^{kH}] = \mathbb{E}_{\mathbf{z}} [(\hat{\mathbf{h}}_{tm}^{k} \! - \! {\mathbf{h}}_{tm}) (\hat{\mathbf{h}}_{tm}^{k} \! - \! {\mathbf{h}}_{tm})^H] \nonumber \\
&=  \mathbb{E}_{\mathbf{z}} [{\mathbf{h}}_{tm} {\mathbf{h}}_{tm}^{H}] - \hat{\mathbf{h}}_{tm}^{k} \hat{\mathbf{h}}_{tm}^{kH} = \delta_{tm}^k \mathbf{I}_N.
\end{align*}
Since $\mathbf{h}_{tm}^k\sim \mathcal{CN}(\mathbf{0}_N, \beta_{m}\sigma_{\tt{h}}^2 \mathbf{I}_N)$,  the  estimate $\hat{\mathbf{h}}_{tm}^k$ and the error $\tilde{\mathbf{h}}_{tm}^k$ are distributed as $\mathcal{CN}$ $(\mathbf{0}_N, $ $\eta_{tm}^k \| \mathbf{p}_m \|^2 a_m g_{tm} \beta_m \sigma_{\tt{h}}^2 \mathbf{I}_N)$ and $\mathcal{CN}(\mathbf{0}_N, \delta_{tm}^k \mathbf{I}_N)$ respectively.
Also, MMSE estimates are uncorrelated with their errors \cite{ref_bjornson_mimo_2017}.

\section{Proof of Theorem \ref{thm_crlb}} \label{appendix_crlb}
The FIM sub-block associated with $\overline{\mathbf{z}}_t$ in the $t$th RB is defined as $\mathbf{J}_t = \mathbf{J}_{t1} + \mathbf{J}_{t2}$ \cite{ref_prasad_tsp_2013}, with
\begin{align*}
\mathbf{J}_{t1} &= \mathbb{E} \left[ \left( \dfrac{\partial \log p(\overline{\mathbf{z}}_t)}{\partial \overline{\mathbf{z}}_t^*} \right) \left( \dfrac{\partial \log p(\overline{\mathbf{z}}_t)}{\partial \overline{\mathbf{z}}_t^*} \right)^H \right], \\
\mathbf{J}_{t2} &= \mathbb{E} \left[ \mathbb{E} \left[ \left( \dfrac{\partial \log p(\overline{\mathbf{y}}_t | \overline{\mathbf{z}}_t)}{\partial \overline{\mathbf{z}}_t^*} \right) \left( \dfrac{\partial \log p(\overline{\mathbf{y}}_t| \overline{\mathbf{z}}_t)}{\partial \overline{\mathbf{z}}_t^*} \right)^H \Bigg| \overline{\mathbf{z}}_t \right] \right].
\end{align*}
The conditional probability of $\overline{\mathbf{y}}_t$ given $\overline{\mathbf{z}}_t$ is $\mathcal{CN}(\bm{\Phi}_t\overline{\mathbf{z}}_t, N_0 \mathbf{I}_{\tau N})$, whereas the channel vector $\overline{\mathbf{z}}_t$ is distributed as $\mathcal{CN}(\mathbf{0}_{NM_t}, \mathbf{I}_{N} \otimes {\bm{\Gamma}}_t)$.
Hence, the log of the conditional probabilities behave as
\begin{align*}
\log p(\overline{\mathbf{z}}_t) & \varpropto - \overline{\mathbf{z}}_t^H (\mathbf{I}_{N} \otimes {\bm{\Gamma}}_t)^{-1} \overline{\mathbf{z}}_t, \\
\log p(\overline{\mathbf{y}}_t | \overline{\mathbf{z}}_t) & \varpropto - \dfrac{\| \overline{\mathbf{y}}_t - \bm{\Phi}_t\overline{\mathbf{z}}_t \|_2^2}{N_0}.
\end{align*}
Upon taking the derivative and then calculating the required expectations, it is straightforward to show that $\mathbf{J}_{t1} = \mathbf{I}_N \otimes {\bm{\Gamma}}_t^{-1} $ and $\mathbf{J}_{t2} = \mathbf{I}_N \otimes (\mathbf{P}_t^H \mathbf{P}_t/N_0)$.
Further, the sub-blocks of $\mathbf{J}_t$ corresponding to different antennas are identical and equal to $\mathbf{P}_t^H \mathbf{P}_t/N_0 + {\bm{\Gamma}}_t^{-1}$.
The result follows.

\section{Proof of Theorem \ref{thm_sinr}}
\label{appendix_sinr}
In order to compute the SINR, we first compute the power of the individual components.
The desired signal power is 
\begin{align*} 
\mathbb{E}_{\mathbf{z}} [|T_1|^2]  &= \mathbb{E}_{\mathbf{z}} [|{\mathbf{a}}_{tm}^{kH} \hat{\mathbf{h}}_{tm}^{k} a_m g_{tm} x_{m}|^2] =  P a_m^2 g_{tm}^2 |{\mathbf{a}}_{tm}^{kH} \hat{\mathbf{h}}_{tm}^{k}|^2.
\end{align*}
The powers of $a_i$ and $g_{ti} $ are dropped, since they are binary-valued.
In order to account for zero data rates for false positive users, the desired signal power is non-zero only for true positive users and the desired gain is written as
\begin{align} \label{eqn_sig_pow}
{\tt{Gain}}_{tm}^k &\triangleq \frac{ \mathbb{E}_{\mathbf{z}} [|T_1|^2]  }{\| {\mathbf{a}}_{tm}^{k} \|^2} = P \hat{a}_m a_m g_{tm}  \frac{ |{\mathbf{a}}_{tm}^{kH} \hat{\mathbf{h}}_{tm}^{k}|^2 }{\| {\mathbf{a}}_{tm}^{k} \|^2}.
\end{align}
The power of the estimation error term is calculated as
\begin{align*} 
\mathbb{E}_{\mathbf{z}} [|T_2|^2]  &= \mathbb{E}_{\mathbf{z}} [|{\mathbf{a}}_{tm}^{kH} \tilde{\mathbf{h}}_{tm}^{k} a_m g_{tm} x_{m}|^2] = P a_m^2 g_{tm}^2 \delta_{tm}^k \|{\mathbf{a}}_{tm}^{k} \|^2.
\end{align*}
Next, the power of the first inter-user interference term is
\begin{align}
&\mathbb{E}_{\mathbf{z}} [|T_3|^2]  = \mathbb{E}_{\mathbf{z}} \left[ \left| \textstyle{\sum\nolimits_{i \in \mathcal{S}_k^m \cap \mathcal{A}}} {\mathbf{a}}_{tm}^{kH} \mathbf{h}_{ti} a_i g_{ti} x_{i} \right|^2\right] \nonumber \\
& \ \ \ \ = P \textstyle{\sum\nolimits_{i \in \mathcal{S}_k^m \cap \mathcal{A}}} a_i^2 g_{ti}^2 {\mathbf{a}}_{tm}^{kH} \mathbb{E}_{\mathbf{z}}[ \mathbf{h}_{ti} \mathbf{h}_{ti}^H ] {\mathbf{a}}_{tm}^{k} \nonumber \\
& \ \ \ \ \stackrel{(a)}{=}  P \textstyle{\sum\nolimits_{i \in \mathcal{S}_k^m \cap \mathcal{A}}} a_i^2 g_{ti}^2 {\mathbf{a}}_{tm}^{kH} (\delta_{ti}^k \mathbf{I}_N + \hat{\mathbf{h}}_{ti}^{k} \hat{\mathbf{h}}_{ti}^{kH}) {\mathbf{a}}_{tm}^{k} \nonumber \\
& \ \ \ \ =  P \textstyle{\sum\nolimits_{i \in \mathcal{S}_k^m \cap \mathcal{A}}} a_i^2 g_{ti}^2 ( \|{\mathbf{a}}_{tm}^{k} \|^2   \delta_{ti}^k  + | {\mathbf{a}}_{tm}^{kH} \hat{\mathbf{h}}_{ti}^{k}|^2 ),
\end{align}
where $(a)$ follows from Theorem \ref{thm_ch_est}.
Here, $\mathbb{E}_{\mathbf{z}} [|T_2|^2] + \mathbb{E}_{\mathbf{z}} [|T_3|^2]$ represents the contribution of estimation error components of all true positive users and multi-user interference components of other true positive users. 
We now split the normalized version of the above into the sum of the error component ${\tt{Est}}_{tm}^k$ and the multi-user interference ${\tt{MUI}}_{tm}^k$ as follows
\begin{align} 
{\tt{Est}}_{tm}^k &\triangleq P \textstyle{\sum\nolimits_{i \in \mathcal{S}_k}} \hat{a}_i a_i g_{ti} \delta_{ti}^k ,  \label{eqn_est_err_pow} \\
{\tt{MUI}}_{tm}^k  &\triangleq P \textstyle{\sum\nolimits_{i \in \mathcal{S}_k^m}} \hat{a}_i a_i g_{ti} \dfrac{ | {\mathbf{a}}_{tm}^{kH} \hat{\mathbf{h}}_{ti}^{k} |^2   }{  \|{\mathbf{a}}_{tm}^{k} \|^2}  \label{eqn_int_pow_1}.
\end{align}
The power of the second inter-user interference term is
\begin{align} 
&\mathbb{E}_{\mathbf{z}} [|T_4|^2]  = \mathbb{E}_{\mathbf{z}} \left[ \left| \textstyle{\sum\nolimits_{i \in \mathcal{S}_k^m \cap \mathcal{M}}} {\mathbf{a}}_{tm}^{kH} \mathbf{h}_{ti} a_i g_{ti} x_{i} \right|^2\right] \nonumber \\
& \ \ \ \ \stackrel{(b)}{=} P \textstyle{\sum\nolimits_{i \in \mathcal{S}_k^m \cap \mathcal{M}}} a_i^2 g_{ti}^2 {\mathbf{a}}_{tm}^{kH} \mathbb{E}[ \mathbf{h}_{ti} \mathbf{h}_{ti}^H ] {\mathbf{a}}_{tm}^{k} \nonumber \\
& \ \ \ \ =  P \textstyle{\sum\nolimits_{i \in \mathcal{S}_k^m \cap \mathcal{M}}} a_i^2 g_{ti}^2 {\mathbf{a}}_{tm}^{kH} (\beta_i \sigma_{\tt{h}}^2 \mathbf{I}_N) {\mathbf{a}}_{tm}^{k} \nonumber \\
& \ \ \ \ =  P \textstyle{\sum\nolimits_{i \in \mathcal{S}_k^m \cap \mathcal{M}}} a_i^2 g_{ti}^2 \beta_{i} \sigma_{\tt{h}}^2 \|{\mathbf{a}}_{tm}^{k} \|^2,
\end{align}
\noindent where the conditional expectation is dropped in $(b)$ since the BS does not have the knowledge of the channel estimates of false negative users.
The normalised power of the false negative users is calculated as
\begin{align} \label{eqn_int_pow_2}
{\tt{FNU}}_{tm}^k &\triangleq P \textstyle{\sum\nolimits_{i \in \mathcal{S}_k^m}} (1- \hat{a}_i) a_i g_{ti} \beta_i \sigma_{\tt{h}}^2.
\end{align}
Finally, the noise power is calculated as
\begin{align} \label{eqn_noise_pow}
\mathbb{E}_{\mathbf{z}} [|T_5|^2]  &= \mathbb{E}_{\mathbf{z}} [|{\mathbf{a}}_{tm}^{kH} \mathbf{n}_{t}|^2] =  N_0 \|{\mathbf{a}}_{tm}^{k} \|^2.
\end{align}
Since the five terms in the received signal in \eqref{eqn_combining} are mutually uncorrelated, a meaningful expression for the SINR can be obtained by dividing the useful signal power from \eqref{eqn_sig_pow} by the sum of the interference and the noise powers (which follow from \eqref{eqn_est_err_pow}, \eqref{eqn_int_pow_1}, \eqref{eqn_int_pow_2}, and \eqref{eqn_noise_pow}) \cite{ref_hassibi_tit_2003,ref_bjornson_mimo_2017}.
The SINR can thus be calculated as in \eqref{eqn_sinr_all} for all the users. 
\end{appendices}

\bibliographystyle{IEEEtran}
\bibliography{IEEEabrv,my_refs}

\begin{thebibliography}{10}
\providecommand{\url}[1]{#1}
\csname url@samestyle\endcsname
\providecommand{\newblock}{\relax}
\providecommand{\bibinfo}[2]{#2}
\providecommand{\BIBentrySTDinterwordspacing}{\spaceskip=0pt\relax}
\providecommand{\BIBentryALTinterwordstretchfactor}{4}
\providecommand{\BIBentryALTinterwordspacing}{\spaceskip=\fontdimen2\font plus
\BIBentryALTinterwordstretchfactor\fontdimen3\font minus
  \fontdimen4\font\relax}
\providecommand{\BIBforeignlanguage}[2]{{%
\expandafter\ifx\csname l@#1\endcsname\relax
\typeout{** WARNING: IEEEtran.bst: No hyphenation pattern has been}%
\typeout{** loaded for the language `#1'. Using the pattern for}%
\typeout{** the default language instead.}%
\else
\language=\csname l@#1\endcsname
\fi
#2}}
\providecommand{\BIBdecl}{\relax}
\BIBdecl

\bibitem{ref_shariatmadari_commag_2015}
H.~{Shariatmadari}, R.~{Ratasuk}, S.~{Iraji}, A.~{Laya}, T.~{Taleb},
  R.~{Jäntti}, and A.~{Ghosh}, ``Machine-type communications: current status
  and future perspectives toward {5G} systems,'' \emph{{IEEE} Commun. Mag.},
  vol.~53, 2015.

\bibitem{ref_xia_comstd_2019}
T.~{Xia}, M.~M. {Wang}, C.~{Jiang}, J.~{Zhang}, L.~{Wang}, and X.~{You},
  ``Practical machine-type communication for energy internet of things: An
  introduction,'' \emph{{IEEE} Commun. Mag.}, vol.~3, no.~1, pp. 48--59, 2019.

\bibitem{ref_liu_spmag_2018}
L.~{Liu}, E.~G. {Larsson}, W.~{Yu}, P.~{Popovski}, C.~{Stefanovic}, and E.~{de
  Carvalho}, ``Sparse signal processing for grant-free massive connectivity: A
  future paradigm for random access protocols in the internet of things,''
  \emph{{IEEE} Signal Process. Mag.}, vol.~35, no.~5, pp. 88--99, 2018.

\bibitem{ref_liva_toc_2011}
G.~{Liva}, ``Graph-based analysis and optimization of contention resolution
  diversity slotted {ALOHA},'' \emph{{IEEE} Trans. Commun.}, vol.~59, no.~2,
  pp. 477--487, February 2011.

\bibitem{ref_liu_tsp_2018}
L.~{Liu} and W.~{Yu}, ``Massive connectivity with massive {MIMO}—part {I}:
  Device activity detection and channel estimation,'' \emph{{IEEE} Trans.
  Signal Process.}, vol.~66, no.~11, pp. 2933--2946, 2018.

\bibitem{ref_narayanan_istc_2012}
K.~R. {Narayanan} and H.~D. {Pfister}, ``Iterative collision resolution for
  slotted {ALOHA}: An optimal uncoordinated transmission policy,'' in
  \emph{Proc. ISTC}, Aug 2012, pp. 136--139.

\bibitem{ref_bjornson_mimo_2017}
E.~Bj{\"o}rnson, J.~Hoydis, L.~Sanguinetti \emph{et~al.}, ``Massive {MIMO}
  networks: Spectral, energy, and hardware efficiency,'' \emph{Foundations and
  Trends{\textregistered} in Signal Processing}, vol.~11, no. 3-4, pp.
  154--655, 2017.

\bibitem{ref_clazzer_icc_2017}
F.~{Clazzer}, E.~{Paolini}, I.~{Mambelli}, and C.~{Stefanovic}, ``Irregular
  repetition slotted {ALOHA} over the {Rayleigh} block fading channel with
  capture,'' in \emph{Proc. ICC}, May 2017.

\bibitem{ref_khaleghi_pimrc_2017}
E.~E. {Khaleghi}, C.~{Adjih}, A.~{Alloum}, and P.~{Muhlethaler}, ``Near-far
  effect on coded slotted {ALOHA},'' in \emph{Proc. PIMRC}, Oct 2017.

\bibitem{ref_dumas_tvt_2021}
C.~Dumas, L.~Sala{ü}n, I.~Hmedoush, C.~Adjih, and C.~S. Chen, ``Design of
  coded slotted {ALOHA} with interference cancellation errors,'' \emph{{IEEE}
  Trans. Veh. Technol.}, vol.~70, no.~12, pp. 12\,742--12\,757, 2021.

\bibitem{ref_valentini_globecom_2021}
L.~Valentini, A.~Faedi, M.~Chiani, and E.~Paolini, ``Coded random access for
  {6G}: Intra-frame spatial coupling with {ACKs},'' in \emph{2021 IEEE {GC
  Wkshps}}, 2021, pp. 1--6.

\bibitem{ref_shieh_tcom_2022}
S.~L. Shieh and S.~H. Yang, ``Enhanced irregular repetition slotted {ALOHA}
  under {SIC} limitation,'' \emph{{IEEE} Trans. Commun.}, pp. 1--1, 2022.

\bibitem{ref_saha_jsac_2021}
S.~Saha, V.~B. Sukumaran, and C.~R. Murthy, ``On the minimum average age of
  information in {IRSA} for grant-free {mMTC},'' \emph{{IEEE} J. Sel. Areas
  Commun.}, vol.~39, no.~5, pp. 1441--1455, 2021.

\bibitem{ref_srivatsa_spawc_2019}
C.~R. {Srivatsa} and C.~R. {Murthy}, ``Throughput analysis of {PDMA/IRSA} under
  practical channel estimation,'' in \emph{Proc. SPAWC}, July 2019.

\bibitem{ref_srivatsa_chest_tsp_2021}
C.~R. Srivatsa and C.~R. Murthy, ``On the impact of channel estimation on the
  design and analysis of {IRSA} based systems,'' \emph{Accepted, IEEE Trans.
  Signal Process.}, June 2022.

\bibitem{ref_sun_pimrc_2018}
J.~Sun, R.~Liu, and E.~Paolini, ``Detecting the number of active users in
  {IRSA} access protocols,'' in \emph{Proc. PIMRC}, 2018.

\bibitem{ref_fengler_tit_2021}
A.~{Fengler}, S.~{Haghighatshoar}, P.~{Jung}, and G.~{Caire}, ``Non-bayesian
  activity detection, large-scale fading coefficient estimation, and unsourced
  random access with a massive {MIMO} receiver,'' \emph{{IEEE} Trans. Inf.
  Theory}, pp. 1--1, 2021.

\bibitem{ref_haghighatshoar_isit_2018}
S.~Haghighatshoar, P.~Jung, and G.~Caire, ``Improved scaling law for activity
  detection in massive {MIMO} systems,'' in \emph{Proc. ISIT}, 2018.

\bibitem{ref_guo_tcom_2022}
M.~Guo and M.~C. Gursoy, ``Joint activity detection and channel estimation in
  cell-free massive {MIMO} networks with massive connectivity,'' \emph{{IEEE}
  Trans. Commun.}, vol.~70, no.~1, pp. 317--331, 2022.

\bibitem{ref_senel_tcom_2018}
K.~{Senel} and E.~G. {Larsson}, ``Grant-free massive {MTC}-enabled massive
  {MIMO}: A compressive sensing approach,'' \emph{{IEEE} Trans. Commun.},
  vol.~66, no.~12, pp. 6164--6175, 2018.

\bibitem{ref_jeong_tvt_2018}
B.~K. {Jeong}, B.~{Shim}, and K.~B. {Lee}, ``{MAP}-based active user and data
  detection for massive machine-type communications,'' \emph{{IEEE} Trans. Veh.
  Technol.}, vol.~67, no.~9, pp. 8481--8494, 2018.

\bibitem{ref_rajoriya_tcom_2022}
A.~Rajoriya, S.~Rukhsana, and R.~Budhiraja, ``Centralized and decentralized
  active user detection and channel estimation in {mMTC},'' \emph{{IEEE} Trans.
  Commun.}, vol.~70, no.~3, pp. 1759--1776, 2022.

\bibitem{ref_zhang_tvt_2021}
X.~Zhang, F.~Labeau, L.~Hao, and J.~Liu, ``Joint active user detection and
  channel estimation via {B}ayesian learning approaches in {MTC}
  communications,'' \emph{{IEEE} Trans. Veh. Technol.}, vol.~70, no.~6, pp.
  6222--6226, 2021.

\bibitem{ref_zou_splet_2020}
Q.~Zou, H.~Zhang, D.~Cai, and H.~Yang, ``Message passing based joint channel
  and user activity estimation for uplink grant-free massive {MIMO} systems
  with low-precision {ADCs},'' \emph{{IEEE} Signal Process. Lett.}, vol.~27,
  pp. 506--510, 2020.

\bibitem{ref_zhang_wicomlet_2022}
W.~Zhang, J.~Li, X.~Zhang, and S.~Zhou, ``A joint user activity detection and
  channel estimation scheme for packet-asynchronous grant-free access,''
  \emph{{IEEE} Wireless Commun. Lett.}, vol.~11, pp. 338--342, 2022.

\bibitem{ref_wipf_tsp_2007}
D.~P. {Wipf} and B.~D. {Rao}, ``An empirical {Bayesian} strategy for solving
  the simultaneous sparse approximation problem,'' \emph{{IEEE} Trans. Signal
  Process.}, vol.~55, no.~7, pp. 3704--3716, 2007.

\bibitem{ref_tipping_2001_jmlr}
M.~E. Tipping, ``Sparse {Bayesian} learning and the relevance vector machine,''
  \emph{Journal of machine learning research}, vol.~1, no. Jun, pp. 211--244,
  2001.

\bibitem{ref_dempster_jrss_1977}
A.~P. Dempster, N.~M. Laird, and D.~B. Rubin, ``Maximum likelihood from
  incomplete data via the {EM} algorithm,'' \emph{Journal of the Royal
  Statistical Society}, vol.~39, no.~1, pp. 1--22, 1977.

\bibitem{ref_tang_tit_2010}
G.~Tang and A.~Nehorai, ``Performance analysis for sparse support recovery,''
  \emph{{IEEE} Trans. Inf. Theory}, vol.~56, no.~3, pp. 1383--1399, 2010.

\bibitem{ref_hassibi_tit_2003}
B.~Hassibi and B.~Hochwald, ``How much training is needed in multiple-antenna
  wireless links?'' \emph{{IEEE} Trans. Inf. Theory}, vol.~49, no.~4, pp.
  951--963, 2003.

\bibitem{ref_prasad_tsp_2013}
R.~Prasad and C.~R. Murthy, ``{Cramér-Rao}-type bounds for sparse {B}ayesian
  learning,'' \emph{{IEEE} Trans. Signal Process.}, vol.~61, no.~3, pp.
  622--632, 2013.

\bibitem{ref_zhang_tvt_2012}
Z.~Zhang, J.~Liu, and L.~Keping, ``Low-complexity cell search with fast {PSS}
  identification in {LTE},'' \emph{{IEEE} Trans. Veh. Technol.}, vol.~61,
  no.~4, pp. 1719--1729, 2012.

\end{thebibliography}

\end{document}